\DeclareMathOperator{\lie}{\pounds}
\DeclareMathOperator{\bepsilon}{\boldsymbol{\epsilon}}
\DeclareMathOperator{\bomega}{\boldsymbol{\omega}}
\DeclareMathOperator{\bL}{\boldsymbol{L}}
\DeclareMathOperator{\btheta}{\boldsymbol{\theta}}
\DeclareMathOperator{\bN}{\boldsymbol{N}}
\DeclareMathOperator{\bJ}{\boldsymbol{J}}
\DeclareMathOperator{\bE}{\boldsymbol{E}}
\DeclareMathOperator{\bC}{\boldsymbol{C}}
\DeclareMathOperator{\bQ}{\boldsymbol{Q}}
\DeclareMathOperator{\bS}{\boldsymbol{S}}
\DeclareMathOperator{\mE}{\mathcal E}
\DeclareMathOperator{\mV}{\mathcal V}
\DeclareMathOperator{\mK}{\mathcal K}
\newcommand{\be}{\begin{equation}}
\newcommand{\ee}{\end{equation}}
\newcommand{\ba}{\begin{aligned}}
\newcommand{\ea}{\end{aligned}}
\let\oldproofname=\proofname
\renewcommand{\proofname}{\rm\bf{\oldproofname}:}
\newtheorem{theorem}{Theorem}[section]
\newtheorem*{corollary}{Corollary}
\newtheorem{lemma}{Lemma}[section]
\begin{document}

\pagestyle{myheadings}

\title{Dynamic and Thermodynamic Stability of Relativistic, Perfect
  Fluid Stars}
\author{Stephen R. Green}
\email{sgreen04@uoguelph.ca; CITA National Fellow}
\affiliation{Department of Physics\\
  University of Guelph\\
  Guelph, Ontario N1G 2W1, Canada}
\author{Joshua S. Schiffrin}
\email{schiffrin@uchicago.edu}
\author{Robert M. Wald}
\email{rmwa@uchicago.edu}
\affiliation{Enrico Fermi Institute and Department of Physics \\
  The University of Chicago \\
  5640 S. Ellis Ave., Chicago, IL 60637, U.S.A.}
\date{\today}

\begin{abstract} 
\begin{spacing}{1.3} 
  We consider perfect fluid bodies (``stars'') in general relativity,
  with the local state of the fluid specified by its $4$-velocity,
  $u^a$, its ``particle number density,'' $n$, and its ``entropy per
  particle,'' $s$.  A star is said to be in {\em dynamic equilibrium}
  if it is a stationary, axisymmetric solution to the Einstein-fluid
  equations with circular flow. A star is said to be in {\em
    thermodynamic equilibrium} if it is in dynamic equilibrium and its
  total entropy, $S$, is an extremum for all variations of initial
  data that satisfy the Einstein constraint equations and have fixed
  total mass, $M$, particle number, $N$, and angular momentum, $J$. We
  prove that for a star in dynamic equilibrium, the necessary and
  sufficient condition for thermodynamic equilibrium is constancy of
  angular velocity, $\Omega$, redshifted temperature, $\widetilde{T}$,
  and redshifted chemical potential, $\widetilde{\mu}$. A star in
  dynamic equilibrium is said to be {\em linearly dynamically stable}
  if all physical, gauge invariant quantities associated with linear
  perturbations of the star remain bounded in time; it is said to be
  {\em mode stable} if there are no exponentially growing solutions
  that are not pure gauge. A star in thermodynamic equilibrium is said
  to be {\em linearly thermodynamically stable} if $\delta^2 S < 0$
  for all variations at fixed $M$, $N$, and $J$; equivalently, a star
  in thermodynamic equilibrium is linearly thermodynamically stable if
  \mbox{$\delta^2 M - \widetilde{T} \delta^2 S -\widetilde{\mu} \delta^2 N -
    \Omega \delta^2 J > 0$} for all variations that, to first order,
  satisfy $\delta M = \delta N = \delta J = 0$ (and, hence, $\delta
  S=0$). Friedman previously identified positivity of canonical
  energy, $\mathcal E$, as a criterion for dynamic stability and
  argued that all rotating stars are dynamically unstable to
  sufficiently non-axisymmetric perturbations (the CFS instability), so our main focus is on
  axisymmetric stability (although we develop our formalism and prove
  many results for non-axisymmetric perturbations as well).  We show
  that for a star in dynamic equilibrium, mode stability holds with
  respect to all axisymmetric perturbations if $\mathcal E$ is
  positive on a certain subspace, $\mathcal V$, of axisymmetric
  Lagrangian perturbations that, in particular, have vanishing
  Lagrangian change in angular momentum density.  Conversely, if
  $\mathcal E$ fails to be positive on $\mathcal V$, then there exist
  perturbations that cannot become asymptotically stationary at late
  times.  We further show that for a star in thermodynamic
  equilibrium, for all Lagrangian perturbations, we have $\mathcal E_r
  = \delta^2 M - \Omega \delta^2 J$, where $\mathcal E_r$ denotes the
  ``canonical energy in the rotating frame,'' so positivity of $\mE_r$
  for perturbations with $\delta J = 0$ is a necessary condition for
  thermodynamic stability.  For axisymmetric perturbations, we have
  $\mathcal E = \mathcal E_r$, so a necessary condition for
  thermodynamic stability with respect to axisymmetric perturbations
  is positivity of $\mathcal E$ on all perturbations with $\delta J =
  0$, not merely on the perturbations in $\mathcal V$.  Many of our
  results are in close parallel with the results of Hollands and Wald
  for the theory of black holes.
\end{spacing}
\end{abstract}

\maketitle

\tableofcontents

\newpage

\section{Introduction} \label{intro}

In a recent paper, Hollands and Wald \cite{HollandsWald} applied Lagrangian methods to analyze the linear stability of black holes. They showed that the necessary and sufficient condition for dynamical stability of a black hole with respect to axisymmetric perturbations is the positivity of canonical energy, $\mathcal E$, on the subspace of perturbations with vanishing first order change in mass, angular momentum, and area, $\delta M = \delta J = \delta A= 0$. They further showed that for axisymmetric perturbations, the canonical energy is given in terms of second order variations by 
\begin{equation}
\mathcal E = \delta^2 M - \frac{\kappa}{8 \pi} \delta^2 A - \Omega_H \delta^2 J \, ,
\end{equation}
thereby showing that for perturbations with $\delta M = \delta J = \delta A= 0$, dynamical stability of a black hole is equivalent to its thermodynamic stability (with $\kappa/2\pi$ identified with temperature and $A/4$ identified with entropy). 

It is of interest to know whether similar results hold for the dynamic and thermodynamic stability of ordinary perfect fluid stars in general relativity. This question is in some ways simpler than the corresponding question for black holes, since the black hole horizon was the source of many of the difficulties and subtleties in the analysis of \cite{HollandsWald}. On the other hand, a number of new difficulties and subtleties arise in the Lagrangian formulation of the Einstein-fluid equations, which, in some ways, make the analysis of the relationship between dynamic and thermodynamic stability for fluid stars more difficult than for black holes.

There have been many previous analyses of the stability of relativistic fluid stars. The most relevant for our work are the analyses of Friedman \cite{Friedman} and Lindblom and Hiscock \cite{LindblomHiscock}. Based upon earlier work by Chandrasekhar \cite{Chandra} and Friedman and Schutz \cite{FriedmanSchutz}, Friedman investigated the dynamic stability of relativistic stars with respect to perturbations that arise in the Lagrangian displacement framework. He showed that positivity of canonical energy defined with respect to the time translation Killing field provides a criterion for dynamic stability, and he used this criterion to show that all rotating perfect fluid stars are dynamically unstable to non-axisymmetric perturbations of sufficiently high angular quantum number $m$ (the CFS instability). Lindblom and Hiscock investigated the effects of viscosity and thermal conductivity on ``short length scale'' perturbations of rigidly rotating stars with constant redshifted temperature. Their investigation is therefore closely related to the thermodynamic stability of stars that are in thermal equilibrium. They argued that the appropriate condition for stability with respect to such dissipative processes is the positivity of a similar canonical energy, $\mE_r$, that is defined with respect to the Killing field to which the fluid $4$-velocity is proportional. Lindblom and Hiscock also argued that such dissipative processes would damp out the dynamic instability found by Friedman for perfect fluid stars.

In this paper, we will give a comprehensive, unified analysis of the dynamic stability and thermodynamic equilibrium and stability of fluid stars in general relativity. The derivation of some of our results will be restricted to fluid  perturbations that can be described within the Lagrangian displacement framework, but many of our results, including all of the results of section \ref{thermo}, will apply to completely general fluid perturbations. 

With regard to dynamic stability, our analysis will largely reproduce results of Friedman with respect to non-axisymmetric perturbations, but there will be a number of significant differences and extensions in the axisymmetric case. In particular, we shall see that in the axisymmetric case, we must impose a physical restriction on the perturbations to which the positivity of canonical energy criterion can be directly applied: Axisymmetric stability can be directly tested via positivity of canonical energy only on a subspace, $\mV$, of perturbations that, in particular, have vanishing Lagrangian change in the angular momentum per particle, $\Delta j$. Nevertheless, we prove that positivity of canonical energy on this restricted class of perturbations implies mode stability for all perturbations---including those that cannot be described in the Lagrangian displacement framework---so positivity of $\mE$ on $\mV$ is necessary and sufficient for dynamic stability with respect to axisymmetric perturbations.

Our analysis of thermodynamic stability of stars that are in thermodynamic equilibrium will depart significantly from Lindblom and Hiscock and others in that it will be based upon consideration of the total entropy of perfect fluid solutions rather than particular forms of the dynamical equations for a dissipative fluid. As explained further below, our approach should lead to results equivalent to those that would be obtained by considering dissipative fluids---provided that all forms of dissipation are included---but our approach is completely incapable of yielding any information concerning the growth rate of any thermodynamic instability. We will show that the criterion found by Lindblom and Hiscock for short length scale perturbations---namely positivity of canonical energy, $\mE_r$, defined with respect to the Killing field to which the fluid $4$-velocity is proportional---arises in a very simple and natural way as a necessary condition for thermodynamic stability: For a perfect fluid star to be thermodynamically stable, $\mE_r$ must be positive on the space of perturbations that can be described within the Lagrangian displacement framework and satisfy\footnote{\footnotesize The additional conditions $\delta N = \delta S = 0$ of our criterion given below for thermodynamic stability hold  automatically for fluid perturbations describable in the Lagrangian displacement framework.\medskip} $\delta J = 0$, where $J$ denotes the total (ADM) angular momentum. It is easily seen that all rotating stars are thermodynamically unstable\footnote{\label{lh}\footnotesize This does not contradict the conclusion of Lindblom and Hiscock that viscosity and thermal conductivity can damp the CFS instability and thereby stabilize a star in the sense of vastly increasing the timescale needed for a star to radiate all of its angular momentum; however, on general thermodynamic grounds, dissipative processes can never stabilize a dynamically unstable equilibrium configuration, since dissipative processes can only increase the region of phase space that is accessible to the system under dynamical evolution.\medskip} to suitable non-axisymmetric perturbations. For axisymmetric perturbations, we have $\mE_r = \mE$, so a necessary condition for thermodynamic stability to axisymmetric perturbations is positivity of $\mE$ on all perturbations with $\delta J = 0$ rather than merely on perturbations in $\mV$.

We now state our assumptions concerning the type of perfect fluid matter we consider, and we will then define precisely what we mean by dynamic and thermodynamic equilibrium and stability of fluid stars.

By a perfect fluid, we mean matter that has a stress energy tensor of the form
\be \label{perflu}
T_{ab}=(\rho+p)u_a u_b + p g_{ab}
\ee
with $u^a u_a = -1$.
We will be concerned in this paper with perfect fluids whose local state is characterized by a ``particle number density," $n$, and an ``entropy per particle,"
$s$. The energy density $\rho$ is taken to be a prescribed function, $\rho = \rho(n,s)$, of the variables $(n,s)$. 
The temperature, $T$, and chemical potential, $\mu$, of the fluid are defined by 
\be
T \equiv \frac{1}{n} \frac{\partial\rho}{\partial s} \, , \quad \quad \quad \mu \equiv \frac{\partial\rho}{\partial n} - Ts \, .
\ee
The pressure, $p$, of the fluid is then assumed to be given by
\be \label{gibbsduhem}
p = -\rho + \mu n + Tsn = n\frac{\partial\rho}{\partial n}-\rho \, ,
\ee
where the first equality corresponds to the integrated form of the Gibbs-Duhem relation. The above definitions of $T$ and $\mu$ imply that the
local first law of thermodynamics
\be \label{loc1st}
d \rho = T d(ns) + \mu dn 
\ee
holds as a mathematical identity.  The function $\rho(n,s)$---which characterizes all of the properties of the fluid---is assumed to be chosen so that for all allowed $(n,s)$ we have
\be
\rho \geq 0 \, , \quad \quad p \ge 0 \, , \quad \quad T > 0 \, ,  \quad \quad 
0 \leq c_s^2 \leq 1 \, ,
\label{Tpc}
\ee
where
\be
c_s^2 \equiv \left(\frac{d p}{d\rho}\right)_s \equiv \frac{\partial p/\partial n}{\partial \rho/\partial n}
=\frac{ n\, \partial^2 \rho/\partial n^2}{\partial \rho/\partial n} \, .
\ee
In eq. \eqref{localthermostab} below, we will give additional conditions on the functional form of $\rho(n,s)$ that must be satisfied if local thermodynamic stability of the fluid is to hold. 

The fluid equations of motion consist of conservation of stress energy,
\be \label{eq:conservationofstressenergy}
\nabla_a T^{ab}=0,
\ee
together with conservation of the particle number current,
\be \label{eq:conservationofn}
\nabla_a (nu^a)=0.
\ee
The $u^a$ component of \eqref{eq:conservationofstressenergy} (conservation of energy) together with \eqref{eq:conservationofn} imply conservation of entropy along worldlines,
\be \label{eq:conservationofs}
u^a\nabla_a s=0.
\ee
More generally, any two of \eqref{eq:conservationofn}, \eqref{eq:conservationofs}, and the $u^a$ component of \eqref{eq:conservationofstressenergy} imply the third. 

By a perfect fluid star, we mean a globally hyperbolic, asymptotically flat solution of the Einstein-fluid equations where $n$ has compact spatial support. The total mass-energy, $M$, and angular momentum, $J$, of the star are taken to be the ADM mass and angular momentum, whereas the total particle number, $N$, and entropy, $S$, of the star are defined by
\be
N=-\int_\Sigma n u^a \nu_a,
\label{N}
\ee
\be
S=-\int_\Sigma sn u^a \nu_a,
\label{S}
\ee
where $\Sigma$ is any Cauchy surface, $\nu^a$ is the unit future-directed normal to $\Sigma$, and the volume element on $\Sigma$ induced by the space-time metric is understood. Note that the integrals in \eqref{N} and \eqref{S} are independent of the choice of Cauchy surface $\Sigma$ by virtue of eqs. \eqref{eq:conservationofn} and \eqref{eq:conservationofs}, i.e., $N$ and $S$ are conserved under dynamical evolution. Note also that the presence of gravitational radiation can contribute to $M$ and $J$ but is assumed not to contribute to $S$, even though in a complete physical theory, gravitational radiation would be expected to contribute to the total entropy.

A fluid star is said to be in {\em dynamic equilibrium} if it is a stationary, axisymmetric solution to the Einstein-fluid equations and, in addition, the velocity flow is circular in the sense that $u^a$ takes the form
\be \label{cirflow}
u^a = (t^a + \Omega \varphi^a)/|v|
\ee
for some function $\Omega$, called the angular velocity, where $t^a$ and $\varphi^a$ are, respectively, the timelike and axial Killing fields, and 
\be
|v|^2 \equiv - (t^a + \Omega \varphi^a)(t_a + \Omega \varphi_a) \, .
\ee
An important consequence of the circular flow assumption is the existence of an additional $t-\varphi$ reflection symmetry when Einstein's equation is satisfied \cite{Carter}.
The requirements of axisymmetry and circular flow have been imposed for technical convenience, but it is not expected that these additional conditions exclude any cases of interest.

A fluid star is said to be in {\em thermodynamic equilibrium} if it is in dynamic equilibrium and, in addition, the total entropy, $S$, of the star \eqref{S} is an extremum at fixed total mass, $M$, angular momentum, $J$, and particle number, $N$, i.e., if $\delta S = 0$ for all first order perturbations of the star that satisfy the linearized constraint equations and for which $\delta M = \delta J = \delta N = 0$. Thus, if a star is in dynamic equilibrium but is not in thermodynamic equilibrium, then its entropy can be increased to first order without changing $M$, $J$, and $N$. As seen above, the perfect fluid equations of motion do not allow changes of $S$ to occur dynamically, so such states of higher entropy are not dynamically accessible for a perfect fluid. However, the perfect fluid equations are expected to be only an idealized description of physically realistic systems. When deviations from perfect fluid behavior are taken into account---i.e., ``dissipative processes," such as heat conduction, viscosity, and diffusion---one would expect that the dynamical evolution would be restricted only by the fundamental conservation laws of $M$, $J$, and $N$. Since $S$ is supposed to measure the ``number of micro-states" associated with the macroscopic fluid description, if a star is thermodynamically unstable, one would expect evolution to higher values of  $S$ to occur, although perhaps on a much longer timescale than typical dynamical timescales. Stars that are in dynamic equilibrium but are not in thermodynamic equilibrium can increase their entropy to first order at fixed $M$, $J$, and $N$, so they would be expected evolve away from their initial perfect fluid equilibrium state---towards a state of higher entropy---when such dissipative processes are taken into account. We will prove in section \ref{thermo} below that the necessary and sufficient condition for a star in dynamic equilibrium to be in thermodynamic equilibrium is that its angular velocity, $\Omega$, redshifted temperature, $\widetilde{T} \equiv T |v|$, and redshifted chemical potential, $\widetilde{\mu} \equiv \mu |v|$, be constant throughout the star. We will also show that perturbations of a star in thermodynamic equilibrium satisfy the first law of thermodynamics in the form
\be
\delta M = \widetilde T \delta S + \widetilde\mu \delta N + \Omega \delta J \, .
\label{1stlaw}
\ee

A fluid star that is in dynamic equilibrium is said to be {\em linearly dynamically stable} if any initially smooth, asymptotically flat solution to the linearized Einstein-fluid equations remains bounded (in some suitable gauge) for all time; otherwise the star is said to be linearly dynamically unstable. A much simpler condition to analyze is ``mode stability": A fluid star that is in dynamic equilibrium is said to be {\em mode stable} if there do not exist any smooth, asymptotically flat, non-pure-gauge linearized solutions with time dependence of the form $\exp(\alpha t)$ with $\rm{Re} (\alpha) > 0$. Obviously, linear dynamic stability implies mode stability. One way of proving mode stability is to find a positive-definite conserved norm on perturbations, since this precludes solutions with exponential growth. The existence of a positive-definite conserved norm on perturbations (together with the similar norms on time derivatives of the perturbations) may also enable one to prove linear dynamic stability---as has been done for the case of perturbations of the Schwarzschild metric \cite{KayWald,dafrod}---but considerable further analysis beyond what is needed to prove mode stability would be required to establish dynamic stability. In particular, from the existence of a conserved, positive-definite norm depending on first derivatives of the perturbation, it may not be straightforward to rule out the existence of perturbations that grow linearly in time. In this paper, we shall obtain a criterion for stability in terms of the existence of a conserved positive-definite norm. The satisfaction of our criterion implies mode stability, but we shall not attempt to show that it implies linear dynamic stability.

A fluid star that is in thermodynamic equilibrium is said to be {\em linearly thermodynamically stable} if $S$ is a strict local maximum at fixed $M$, $J$, and $N$, i.e., if we have $\delta^2 S < 0$ for all first and second order variations that satisfy the constraint equations and keep $M$, $J$, and $N$ fixed to both first and second order. Note that it makes sense to inquire about thermodynamic stability only for stars in thermal equilibrium, since otherwise $S$ can be increased to first order at fixed $M$, $J$, and $N$. Now, it follows directly from the first law of thermodynamics \eqref{1stlaw} that the quantity
\be
{\mathcal E}' \equiv \delta^2 M  - \widetilde T \delta^2 S - \widetilde\mu \delta^2 N - \Omega \delta^2 J 
\label{thermostab}
\ee
is independent of the choice of second order perturbation, and thus is a bilinear quantity in the first order perturbation. For variations for which $\delta^2 M = \delta^2 J = \delta^2 N = 0$, positive-definiteness of ${\mathcal E}'$ is obviously equivalent to negative-definiteness of $\delta^2 S$, since we assume $T>0$ (see \eqref{Tpc}). However, since ${\mathcal E}'$ does not depend upon the choice of second order perturbation, and since second order perturbations can be chosen so as to give $\delta^2 M$, $\delta^2 J$ and $\delta^2 N$ any values one wishes, it follows that thermodynamic stability is equivalent to positivity of ${\mathcal E}'$ for all perturbations for which $\delta M = \delta J = \delta N = 0$ (and hence, by the first law, $\delta S = 0$), but with no restrictions placed on the second order perturbation. 

If a star is thermodynamically stable, then for initial conditions sufficiently close to that of the star---and with the same $M$, $J$, and $N$ as the star---dissipative processes should increase $S$ and thereby drive the state of the system back towards that of the star. Conversely, if one can make $\mathcal E'$ negative for a perturbation with $\delta M = \delta J = \delta N = 0$, then dissipative processes should drive the star away from its thermal equilibrium state. Thus, our notion of thermodynamic stability obtained by examining properties of the entropy functional $S$ on perfect fluid states should be equivalent to notions of thermodynamic stability obtained by examining the dynamics of dissipative fluids, provided that all forms of dissipation are included. Our approach via the entropy functional is much cleaner, simpler, and more general than a direct analysis of a particular form of dissipative equations---and it avoids all of the difficulties associated with obtaining mathematically consistent equations for relativistic dissipative fluids---but it has the disadvantage that one cannot estimate growth timescales, i.e., although one can argue that the star will ``eventually'' evolve to a state of higher entropy, one cannot estimate how long it will take the star to evolve to this state, since this will depend upon the detailed nature of the dissipation.

It should be noted that linear thermodynamic stability implies linear dynamic stability in the sense of mode stability for perturbations with $\delta M = \delta J = \delta N = 0$, since ${\mathcal E}'$ provides a conserved (for perfect fluids) norm on such perturbations. However---just as there is no reason why dynamic equilibrium need imply thermodynamic equilibrium---there is no reason why linear dynamic stability need imply linear thermodynamic stability; the star could be stable with respect to perfect fluid dynamics and yet have nearby states of larger $S$ at fixed $M$, $J$, and $N$. As explained in the previous paragraph, a physically realistic star that is linearly thermodynamically unstable would be expected to have instabilities that are driven by dissipative processes. If the matter that composes the star is well approximated as being a perfect fluid, then the growth timescale of these instabilities would be expected to be much longer than typical dynamical timescales, and there is no reason why the growth need be exponential (as opposed, e.g., to linear) in time. A thermodynamic instability that is not associated with a dynamic instability corresponds to the notion of a {\em secular instability}, defined as an instability that only appears in the presence of a ``dissipative force'' and has a growth rate proportional to the strength of the dissipative force \cite{Hunter, Binney} although this terminology is not always used consistently\footnote{\label{stabterm}\footnotesize In particular, from our perspective, ``gravitational radiation reaction'' is not a ``dissipative force,'' as it is an integral aspect of the overall conservative dynamics of a fluid-gravitational system within the context of general relativity. Thus, we would characterize the CFS instability \cite{Friedman} of rotating fluid stars in general relativity as a dynamic instability, even though the timescale for this instability may be much longer than typical dynamical timescales. Nevertheless, Friedman and Stergioulas \cite{FriedmanBook} apply the term ``secular instability'' to this instability when its timescale is sufficiently long, although they state on p.~251 that this characterization cannot be made precise and ``becomes increasingly blurred as the radiation reaction timescale approaches the period of oscillation of a mode.'' In a footnote to that statement, they propose making this distinction in terms of whether the unstable mode is time symmetric. \medskip}.

It is worth noting that for an infinite, uniform, non-gravitating fluid system, linear thermodynamic stability will hold if and only if entropy always decreases to second order by any first order exchange of energy and number of particles between two fixed volumes of the fluid. This is equivalent to the condition that the matrix of second derivatives of $sn$ with respect to the variables $\rho$ and $n$ be negative definite. Equivalently, linear thermodynamic stability will hold if and only if the matrix of second derivatives of $\rho$ with respect to the variables $sn$ and $n$ is positive definite. This, in turn, is equivalent to the positivity of the determinant and trace of this matrix. This yields the relations
\be\label{localthermostab}
\ba
0 &<  \frac{\partial^2 \rho}{\partial n^2}\frac{\partial^2 \rho}{\partial s^2}-\left(\frac{\partial^2 \rho}{\partial n \partial s} -T \right)^2 %= \frac{\partial p}{\partial n}\frac{\partial T}{\partial s} - \frac{\partial p}{\partial s}\frac{\partial T}{\partial n} 
\\
0 &< n^2 \frac{\partial^2 \rho}{\partial n^2}+(1+s^2)\frac{\partial^2 \rho}{\partial s^2}-2sn\left(\frac{\partial^2 \rho}{\partial n \partial s} -T \right),
\ea
\ee
which further restrict the functional form of $\rho$ beyond the relations \eqref{Tpc}. These additional necessary and sufficient conditions for the linear thermodynamic stability of a uniform, non-gravitating fluid are also necessary and sufficient conditions for thermodynamic stability of a star with respect to short length scale perturbations\footnote{\footnotesize Lindblom and Hiscock  \cite{LindblomHiscock} gave only a condition that corresponds to the first of our conditions.\medskip}, since a sufficiently small region of the star can be treated as a homogeneous system, so if relations \eqref{localthermostab} fail, we can increase entropy by energy and particle exchanges within this region, without affecting the global structure of the star. %Conditions \eqref{locthermostab} correspond to the ``short wavelength'' thermodynamic stability criterion obtained by Lindblom and Hiscock \cite{}. 
However, satisfaction of \eqref{localthermostab} is not sufficient to ensure the linear thermodynamic stability of a star, since thermodynamic instabilities of a global nature may occur.

The contents of the remainder of this paper are as follows. In section \ref{diffco}, the general Lagrangian framework of diffeomorphism covariant theories is reviewed, including the definition of the symplectic form and the derivation of a key identity. In section \ref{thermo}, following Iyer \cite{Iyer}, we will apply these results with respect to only the gravitational part of the Lagrangian to derive a first law of thermodynamics relation between the variation of the ADM mass, $M$, and integrals over the fluid involving the the perturbations to the particle density, $\delta n$, entropy per particle, $\delta s$, and angular momentum per particle, $\delta j$. This formula will then be used to prove that the necessary and sufficient condition for a star in dynamic equilibrium to be in thermodynamic equilibrium is that its angular velocity, redshifted temperature, and redshifted chemical potential be constant throughout the star. In that case, the first law of thermodynamics takes the form \eqref{1stlaw}. The results of section \ref{thermo} do not require a Lagrangian framework for the description of the perfect fluid.

In section \ref{Lagrangianform}, we introduce a Lagrangian framework for perfect fluids, which is of the type used by Friedman \cite{Friedman}. To formulate this precisely, we introduce in subsection \ref{LagrangianformA} a $4$-dimensional ``fiducial manifold'' $M'$ that is diffeomorphic to the spacetime manifold $M$. On $M'$ are defined a fixed closed $3$-form $\bN'$ and scalar function $s'$, satisfying $d \bN' = 0$ and $d(s' \bN') = 0$. The dynamical variables of the theory then consist of a spacetime metric $g_{ab}$ on $M$ and a diffeomorphism $\chi: M' \to M$. The entropy per particle, $s$, on spacetime is taken to be the pushforward of $s'$ to $M$ under $\chi$, whereas $n$ and the fluid $4$-velocity $u^a$ on spacetime are constructed from the spacetime metric and the pushforward, $\bN$, of $\bN'$ to $M$. Linearized perturbations are then described by a perturbed spacetime metric $\delta g_{ab}$ and a vector field $\xi^a$, known as the {\em Lagrangian displacement vector field}, representing the infinitesimal change in $\chi$. An arbitrary Einstein-fluid solution can be represented in this framework by suitably choosing $s'$ and $\bN'$. However, once $s'$ and $\bN'$ are chosen, they are required to remain fixed, so consideration is thereby restricted in the Lagrangian framework only to perturbations that correspond to ``moving fluid elements around,'' i.e., the only kinematically allowed changes in entropy per particle and particle current $3$-form are of the form $-\lie_\xi s$ and $-\lie_\xi \bN$, respectively. The symplectic current of the Einstein-fluid system also is obtained in subsection \ref{LagrangianformA}. This symplectic current is then used in subsection \ref{LagrangianformB} to construct the phase space. It is shown that specification of a point in phase space is equivalent to the specification on a Cauchy surface $\Sigma$ of a spatial metric $h_{ij}$, the usual gravitational canonical momentum $\pi^{ij}$, a diffeomorphism from the manifold of ``fiducial flowlines'' (the manifold of integral curves of the vector field on $M'$ that annihilates $\bN'$ through contraction) to $\Sigma$, and the spatial components of the fluid $4$-velocity, $u^i$. A Hilbert space structure, $\mK$, on perturbations is also defined in subsection \ref{LagrangianformB}. The phase space includes a subspace of ``trivial displacements,'' which produce no changes in the physical state of the fluid\footnote{\footnotesize The symplectic form is degenerate on trivial displacements of the form $\xi^a = f u^a$ for any function $f$, so the trivial displacements of this form are automatically eliminated from the phase space. However, there remain additional trivial displacements that are not degeneracies of the symplectic form.\medskip}. The properties of these trivial displacements are analyzed in subsection \ref{LagrangianformC}.

The canonical energy, $\mathcal E$, is defined in section \ref{canonicalenergy}. The canonical energy is conserved in the sense of taking the same value on all asymptotically flat Cauchy surfaces.  However, if the perturbation asymptotically approaches a stationary perturbation at late retarded times, then $\mathcal E$ has a positive net flux at null infinity \cite{Friedman,HollandsWald}.  In that case, $\mathcal E$ can only decrease if evaluated on a hypersurface terminating at null infinity at late retarded times. In order to be able to use the positivity of $\mathcal E$ as a criterion for dynamic stability, it is essential that $\mathcal E$ be degenerate precisely on the perturbations that are physically stationary. Namely, if $\mathcal E$ were degenerate on perturbations that are not physically stationary, its non-negativity cannot guarantee stability, since the perturbations on which it is degenerate could grow exponentially with time. Conversely, if $\mathcal E$ were non-degenerate on physically stationary perturbations, then its failure to be positive cannot guarantee instability, since it could take negative values on physically stationary perturbations, which are manifestly stable. We show that in order to make $\mathcal E$ degenerate precisely on the physically stationary perturbations, it is necessary to restrict the action of $\mE$ to the subspace $\mV$ of perturbations that are symplectically orthogonal to all of the trivial displacements. Perturbations in $\mV$ must satisfy $\Delta \alpha = 0$, where $\Delta \alpha$ denotes the Lagrangian change in circulation, and---depending on the properties of the background star---some additional restrictions may also hold on perturbations in $\mV$. If $\mE \geq 0$ on $\mV$, then mode stability holds for perturbations in $\mV$, whereas if $\mE < 0$ for some perturbation in $\mV$, then this perturbation cannot asymptotically approach a stationary final state, indicating dynamic instability. 

As shown by Friedman \cite{Friedman}, if $\Omega \not\equiv 0$, one can find non-axisymmetric perturbations in $\mV$ for which $\mathcal E < 0$, thus showing that all rotating stars are dynamically unstable (the CFS instability). As also shown by Friedman, in the non-axisymmetric case, the restriction to perturbations in $\mV$ is not a physical restriction---at least for certain background solutions---in that it can be imposed on a general perturbation by addition of a perturbation arising from a trivial displacement. In other words, in the non-axisymmetric case, restriction to $\mV$ corresponds to merely a ``gauge choice'' on the choice of Lagrangian displacement rather than a physical restriction on the perturbation. However, for axisymmetric perturbations, restriction to $\mV$ is a physical restriction on perturbations. Thus, in the axisymmetric case, positivity of $\mE$ on $\mV$ directly shows mode stability only on this restricted class of perturbations. However, we prove in section \ref{canonicalenergy} that, given an arbitrary perturbation---possibly not even describable in the Lagrangian framework---its second time derivative yields a perturbation in $\mV$. Consequently, in the axisymmetric case, mode stability for perturbations in $\mV$ implies mode stability for all perturbations, and positivity of $\mE$ on $\mV$ is necessary and sufficient for dynamic stability with respect to {\em all} axisymmetric perturbations.

Finally, thermodynamic stability of stars in thermodynamic equilibrium is analyzed in section VI. It is shown that a necessary condition for linear thermodynamic stability  is positivity of the canonical energy $\mathcal E_r$---defined with respect to the Killing field $v^a$ to which the fluid $4$-velocity is proportional---on the space of  Lagrangian perturbations having $\delta J = 0$, where $J$ is the total (ADM) angular momentum. It is easily seen that if $\Omega \neq 0$, non-axisymmetric perturbations can be found for which $\mathcal E_r < 0$, so all rotating stars are thermodynamically unstable, although not necessarily on a physically interesting timescale (see footnote \ref{lh}). For axisymmetric perturbations, we have $\mathcal E_r = \mathcal E$, so this necessary condition for thermodynamic stability in the axisymmetric case is positivity of $\mE$ (on all perturbations with $\delta J = 0$, not merely the ones in $\mV$).

Our notation and conventions will generally follow \cite{Waldbook}. Latin indices from the early part of the alphabet ($a,b,c,\dots$) will denote abstract spacetime indices, whereas Latin indices from the mid-alphabet ($i,j,k,\dots$) will denote abstract spatial indices associated with a Cauchy surface. %Greek indices ($\alpha,\beta,\gamma,\dots$) will denote abstract indices in the fiducial manifold. 
Bold typeface will indicate that differential form indices on spacetime have been omitted, e.g., $\bN$ denotes the tensor field $N_{abc}=N_{[abc]}$.

\section{Lagrangian Framework for Diffeomorphism Covariant Theories} \label{diffco}

In this section, we will review the basic constructions of the Lagrangian framework for diffeomorphism covariant theories. We will apply these results to the vacuum Einstein-Hilbert Lagrangian in section \ref{thermo} (even though we will derive results there that are valid when perfect fluid matter is present) and we will apply these constructions to the Einstein-fluid Lagrangian in section \ref{Lagrangianform}. We refer the reader to \cite{LeeWald,IyerWald,IyerWald2,SeifertWald} for a much more complete account.

We assume that we are given a diffeomorphism covariant Lagrangian $4$-form $L_{abcd}$ that is locally and covariantly constructed out of the spacetime metric, $g_{ab}$, and other tensor fields ${\chi^{a_1 \dots a_k}}_{b_1 \dots b_l}$. We collectively denote the dynamical fields as $\phi = (g_{ab}, {\chi^{a_1 \dots a_k}}_{b_1 \dots b_l})$ and suppress all spacetime indices on the dynamical fields for the remainder of this section.

Variation of the Lagrangian yields
\begin{equation}
  \label{eq:L}
  \delta \bL = \bE \cdot \delta \phi + d \btheta(\phi; \delta \phi) \, ,
\end{equation}
where $\bE=0$ are the field equations\footnote{\footnotesize Besides the 4-form indices, the additional indices of $\bE$ that are dual to 
$\phi$ have been suppressed; the ``$\cdot$'' in \eqref{eq:L} denotes that those additional indices are fully contracted into all tensor indices of $\delta\phi$. For example, in vacuum general relativity we have $\bE \cdot \delta \phi=-\frac{1}{16 \pi} G^{ab} \delta g_{ab} \bepsilon$.\medskip}, and the {\em symplectic potential $3$-form} $\btheta$ corresponds to the boundary term that would arise if the variation were performed under an integral sign. The quantity $\delta \phi$ may be formally viewed as a vector---which we denote as $\delta \phi^A$---in the tangent space at $\phi$ of the (infinite dimensional) space of field configurations $\mathcal F$. We will not attempt to define a manifold structure on $\mathcal F$, so our tensor operations on $\mathcal F$ below should be viewed as merely ``formal''; they will be used only to motivate various definitions.

At each $\phi \in {\mathcal F}$ , we obtain a linear map from vectors, $\delta \phi^A$, into numbers by integration of the $3$-form $\btheta$ over a Cauchy surface $\Sigma$. We can interpret this linear map as defining a $1$-form field $\Theta_A$ on $\mathcal F$
\be
\Theta_A \delta \phi^A = \int_\Sigma \btheta(\phi; \delta \phi) \, .
\ee
The symplectic form is defined by
\be \label{WfromTheta}
W_{AB} = ({\mathcal D}  \Theta)_{AB} \, ,
\ee
where $\mathcal D$ denotes the ``$d$" (exterior derivative) operation on forms on $\mathcal F$. In order to evaluate $W_{AB} (\delta_1 \phi)^A (\delta_2 \phi)^B$ at a point $\phi \in {\mathcal F}$, we obviously need only specify the vectors $\delta_1 \phi^A$ and $\delta_2 \phi^B$ at $\phi$. However, it is useful to evaluate $W_{AB} (\delta_1 \phi)^A (\delta_2 \phi)^B$ via the formula\footnote{\footnotesize The formulas  \eqref{WfromTheta} and \eqref{sympform} can formally be seen to be equivalent by introducing an arbitrary derivative operator, $\nabla_A$, on $\mathcal F$ and expanding both formulas in terms of $\nabla_A$ using the usual Lie derivative and exterior derivative expressions.\medskip}
\be
W_{AB} (\delta_1 \phi)^A (\delta_2 \phi)^B = {\mathcal L}_{\delta_1 \phi} (\Theta_B \delta_2 \phi^B) - {\mathcal L}_{\delta_2 \phi} (\Theta_A \delta_1 \phi^A) - \Theta_A ({\mathcal L}_{\delta_1 \phi} \delta_2 \phi)^A \, ,
\label{sympform}
\ee
where $\mathcal L$ denotes the Lie derivative on $\mathcal F$. Use of this formula requires us to extend the definitions of $\delta_1 \phi^A$ and $\delta_2 \phi^B$ off of $\phi$, which can be done in an arbitrary manner. For most purposes, it is convenient to choose the extended vector fields $\delta_1 \phi^A$ and $\delta_2 \phi^B$ so as to commute, which can be done by choosing a 2-parameter family of field configurations $\phi(\lambda_1, \lambda_2)$ and choosing $\delta_1 \phi = \partial \phi/\partial \lambda_1$ and $\delta_2 \phi = \partial \phi/\partial \lambda_2$. In that case, the last term in \eqref{sympform} vanishes. However, in the case where the field variations $\delta_1 \phi^A$ and/or $\delta_2 \phi^B$ arise from the action of an infinitesimal diffeomorphism on $\phi$, it is convenient to choose the extension of $\delta_1 \phi^A$ and/or $\delta_2 \phi^B$ to be the field variation produced by the action of the same infinitesimal diffeomorphism, in which case $\delta_1 \phi^A$ and $\delta_2 \phi^B$ need not commute, and the last term in \eqref{sympform} must be kept. 

Equation \eqref{sympform} corresponds to the formula
\be \label{sympform2}
W_{AB} (\delta_1 \phi)^A (\delta_2 \phi)^B = \int_\Sigma \bomega(\phi; \delta_1 \phi, \delta_2 \phi) 
\ee
where the {\em symplectic current $3$-form} $\bomega$ on spacetime is defined by
\be \label{sympcurrent}
\bomega(\phi; \delta_1 \phi, \delta_2 \phi) = \delta_1 \btheta (\phi; \delta_2 \phi) - \delta_2 \btheta(\phi; \delta_1 \phi)
- \btheta(\phi; \delta_1 \delta_2 \phi - \delta_2 \delta_1 \phi)
\ee
and $\delta_1$ and $\delta_2$ denote the variation of quantities induced by the field variations $\delta_1 \phi$ and $\delta_2 \phi$ respectively. It follows immediately from \eqref{eq:L} that
\be
d \bomega(\phi; \delta_1 \phi, \delta_2 \phi) = \delta_2 \bE \cdot \delta_1 \phi -  \delta_1 \bE \cdot \delta_2 \phi \, ,
\label{sympcons}
\ee
so $\bomega$ is closed whenever $\delta_1 \phi$ and $\delta_2 \phi$ satisfy the linearized equations of motion $\delta_1 \bE = \delta_2 \bE = 0$. Consequently, if the linearized equations of motion hold, then $W_{AB} (\delta_1 \phi)^A (\delta_2 \phi)^B$ is conserved in the sense that it takes the same value if the integral defining this quantity is performed over the surface $\Sigma'$ rather than $\Sigma$, where $\Sigma'$ and $\Sigma$ bound a compact region. For asymptotically flat space times, $W_{AB} (\delta_1 \phi)^A (\delta_2 \phi)^B$ takes the same value on any two asymptotically flat Cauchy surfaces $\Sigma$ and $\Sigma'$ provided that $\delta_1 \phi$ and $\delta_2 \phi$ satisfy the linearized equations of motion and have suitable fall-off at infinity.

For a diffeomorphism covariant Lagrangian, the Noether current $3$-form, $\boldsymbol{\mathcal J}_X$, on spacetime associated with an arbitrary vector field $X^a$ is defined by
\be
\boldsymbol{\mathcal{J}}_X = \btheta(\phi; \pounds_X \phi) - i_X \bL,
\ee
where $\pounds$ denotes the spacetime Lie derivative and $i_X$ denotes contraction of $X^a$ into the first index of a differential form. A simple calculation \cite{IyerWald} shows that
the first variation of $\boldsymbol{\mathcal{J}}_X$ (with $X^a$ fixed, i.e., unvaried) satisfies
\be
\delta \boldsymbol{\mathcal{J}}_X = - i_X \left[\bE(\phi) \cdot \delta \phi\right] + \bomega\left(\phi;\delta \phi, \pounds_X  \phi\right) + d (i_X \btheta) \, ,
\label{dJ}
\ee
where, in this formula, it has not been assumed that $\phi$ satisfies
the field equations nor that $\delta \phi$ satisfies the linearized
field equations. Furthermore, it can be shown \cite{IyerWald2} that
$\boldsymbol{\mathcal{J}}_X$ can be written in the form
\be
\boldsymbol{\mathcal{J}}_X = \bC_X + d \bQ_X \, ,
\label{noecon}
\ee
where $\bQ_X$
is the Noether charge and $\bC_X\equiv \bC_a X^a$ with $\bC_a = 0$ being the constraint equations of the theory \cite{SeifertWald}.
Combining eqs.~(\ref{dJ}) and~(\ref{noecon}), we obtain the fundamental identity
\be
\bomega(\phi;\delta \phi, \pounds_X \phi) = i_X (\bE(\phi) \cdot \delta \phi) + \delta \bC_X(\phi)
+ d \left[ \delta \bQ_X(\phi) - i_X \btheta(\phi;\delta \phi) \right] \, .
\label{fundid}
\ee
It should be emphasized that eq.~(\ref{fundid}) holds for arbitrary $X^a$, $\phi$, and $\delta \phi$. In particular, $\phi$ need not
satisfy the equations of motion and $\delta \phi$ need not satisfy the linearized equations of motion.

One immediate consequence of \eqref{fundid} is the gauge invariance of the symplectic form. If $\phi$ satisfies the equations of motion, $\bE(\phi) = 0$, if $\delta \phi$ satisfies the linearized constraints, $\delta \bC_a = 0$, and if $X^a$ is of compact support (or vanishes sufficiently rapidly at infinity and/or any boundaries), integration of \eqref{fundid} over a Cauchy surface $\Sigma$ yields
\be
W_{AB} (\delta \phi)^A (\pounds_X \phi)^B  = 0 \, .
\label{giW}
\ee
Consequently, the value of $W_{AB} (\delta_1 \phi)^A (\delta_2 \phi)^B$ is unchanged if either $\delta_1 \phi$ or $\delta_2 \phi$ is altered by a gauge transformation $\delta \phi \to \delta \phi + \lie_X \phi$ with $X^a$ of compact support.

Another very important application of the identity \eqref{fundid} concerns the case where $X^a$ approaches a nontrivial asymptotic symmetry rather than being of compact support, in which case we can derive a formula for the
Hamiltonian, $H_X$, conjugate to the notion of ``time translations'' defined by $X^a$, and, thereby, a
definition of ADM-type conserved quantities. Consider asymptotically flat spacetimes with one asymptotically flat ``end". Integrating \eqref{fundid} over a Cauchy surface $\Sigma$, we obtain\footnote{\footnotesize Additional ``boundary terms'' would appear in  \eqref{ham0} if $\Sigma$ terminated at a bifurcate Killing horizon or if there were additional asymptotically flat ends.\medskip}
\be
W_{AB} (\delta \phi)^A (\pounds_X \phi)^B = \int_\Sigma \left[i_X (\bE(\phi) \cdot \delta \phi) + \delta \bC_X(\phi) \right] +
\int_{S_\infty} \left[ \delta \bQ_X(\phi) - i_X \btheta(\phi;\delta \phi) \right] \, ,
\label{ham0}
\ee
where the second integral is taken over a $2$-sphere $S$ that limits to infinity. Suppose that, in this limit, we have
\be
\lim_{S \to S_\infty} \int_S i_X \btheta(\phi;\delta \phi) = \lim_{S \to S_\infty} \delta \int_S i_X \boldsymbol B(\phi)
\label{B}
\ee
for some (``non-covariant'') $3$-form $\boldsymbol B$ constructed from $\phi$ and the background asymptotic structure near infinity. Then, if $\phi$ satisfies the
equations of motion, $\bE(\phi) = 0$---but $\delta \phi$ is {\em not} required to satisfy the linearized equations of motion---we have
\be
W_{AB} (\delta \phi)^A (\pounds_X \phi)^B = \delta H_X
\label{ham1}
\ee
where
\be
H_X \equiv \int_\Sigma \bC_X + \int_{S_\infty} [\bQ_X - i_X \boldsymbol B] \, .
\label{ham2}
\ee

Writing $\delta H_X = (\delta \phi)^A {\mathcal D}_A (H_X)$, we may rewrite \eqref{ham1} as
\be
W_{AB} (\pounds_X \phi)^B = {\mathcal D}_A (H_X) \, .
\label{ham3}
\ee
We now pass from field configuration space, $\mathcal F$, to phase space, $\mathcal P$, by factoring by the degeneracy orbits of $W_{AB}$, as described in \cite{LeeWald}. On $\mathcal P$, $W_{AB}$ is well defined and, by construction, is nondegenerate. Let $W^{AB}$ denote the inverse of $W_{AB}$, so that $W^{AB} W_{BC} = {\delta^A}_C$ where ${\delta^A}_C$ denotes the identity map on $\mathcal P$. Then, we have
\be
(\pounds_X \phi)^A = W^{AB} {\mathcal D}_B (H_X) \, ,
\label{ham4}
\ee
which is the usual form of Hamilton's equations of motion on a symplectic manifold. Thus, if both the asymptotic conditions on $\phi$ and the asymptotic behavior of $X^a$ are such that a $3$-form $\boldsymbol B$ satisfying \eqref{B} exists, \eqref{ham2} yields a Hamiltonian conjugate to the notion of ``time translations'' defined by $X^a$. Note that when evaluated on solutions, $\bC_X = 0$, so $H_X$ is purely a ``surface term"
\be
\left.H_X\right|_{\bE=0} = \int_{S_\infty} [\bQ_X - i_X \boldsymbol B] \, .
\label{ham5}
\ee

In the case where $X^a$ is asymptotic to a time translation at infinity, a $\boldsymbol B$ satisfying \eqref{B} can be found \cite{IyerWald}, and
\eqref{ham5} with $X^a = t^a$ defines the ADM mass
\be
M = \int_{S_\infty} [\bQ_t - i_t \boldsymbol B] \, .
\ee
In the case where $X^a$ is asymptotic to a rotation tangent to $\Sigma$ at infinity and $S$ is chosen so that $X^a$ is tangent to $S$, the pullback of $i_X \btheta$ to $S$ vanishes, and \eqref{ham5} with $X^a = \varphi^a$ and $\boldsymbol B = 0$ defines minus\footnote{\footnotesize The map
$X \rightarrow H_X$ is a linear functional and thus is a ``covector''. To get the energy-momentum vector, we have to ``raise the index'' with the Minkowskian metric. This accounts for the relative minus sign between $M$ and  $J$  in their definitions in terms of $H_X$. \medskip} the ADM angular momentum
\be \label{JADM}
J = -  \int_{S_\infty} \bQ_\varphi \, .
\ee

Finally, let us return to \eqref{ham0} in the case where $\phi$ has a time translation symmetry, i.e., $\pounds_t \phi = 0$ for a vector field $t^a$ that approaches a time translation at infinity. We further assume that the equations of motion, $\bE(\phi) = 0$, hold in a neighborhood of infinity, but we do not assume that they hold in the interior of the spacetime. We similarly assume that $\delta \phi$ satisfies the linearized constraints near infinity, but do not assume that these hold in the interior of the spacetime, nor do we make any symmetry assumptions on $\delta \phi$. Then the left side of \eqref{ham0} vanishes for $X^a = t^a$, and the surface integral on the right side simply yields $\delta M$. Thus we obtain
\be
\delta M = - \int_\Sigma \left[i_t (\bE(\phi) \cdot \delta \phi) + \delta \bC_t(\phi) \right]  \, .
\label{ham6}
\ee
As we shall see in the next section, this formula yields the first law of thermodynamics for fluid stars.

\section{The First Law of Thermodynamics and Thermodynamic Equilibrium} \label{thermo}

Following Iyer \cite{Iyer}, we now apply the results of the previous section to the {\em vacuum} Einstein-Hilbert Lagrangian
\be
L_{abcd}=\frac{1}{16\pi}R\epsilon_{abcd} \, ,
\ee
to obtain the first law of thermodynamics for perfect fluid stars\footnote{\footnotesize Bardeen, Carter, and Hawking \cite{BCH} derived the first law only for the case of stationary and axisymmetric perturbations. The first law for general perturbations is implicit in Corollary 6.6 of Schutz and Sorkin \cite{SchutzSorkin}. Here we follow Iyer's \cite{Iyer} derivation.\medskip}.  For this Lagrangian, the equations of motion tensor field (see \eqref{eq:L}) is
\be
{E^{ab}}_{cdef} = -\frac{1}{16 \pi} G^{ab} \epsilon_{cdef}\,,
\label{Eeom}
\ee
the constraint 3-form $\bC_X$ is
\be
(C_X)_{abc} = \frac{1}{8 \pi} X^d {G_d}^e \epsilon_{eabc} \, ,
\ee
and the Noether charge $2$-form is
\be
(Q_X)_{ab} = - \frac{1}{16 \pi} \nabla_c X_d {\epsilon^{cd}}_{ab} \, . 
\label{ncgr}
\ee
Let $g_{ab}$ be any asymptotically flat, stationary (i.e., $\pounds_t g_{ab} = 0$ for some $t^a$ that is timelike near infinity) metric that is a vacuum solution of Einstein's equation near infinity. Let $\delta g_{ab}$ be any asymptotically flat perturbation of $g_{ab}$ that satisfies the linearized vacuum Einstein equation near infinity. Then
eq.~\eqref{ham6} yields
\be
\delta M = \frac{1}{8 \pi} \int_\Sigma t^a\left[\frac{1}{2} G^{bc} \delta g_{bc}  \epsilon_{adef} - 
 \delta \left({G_a}^b \epsilon_{bdef}\right)\right] \, .
\ee

We now further assume that, in addition to being stationary and asymptotically flat, $g_{ab}$ is axisymmetric (i.e., $\pounds_\varphi g_{ab} = 0$ for some space like $\varphi^a$ with closed orbits) and satisfies
\be
G_{ab} = 8 \pi T_{ab}
\ee
for some $T_{ab}$ of the perfect fluid form \eqref{perflu} having compact spatial support. We further assume that the $4$-velocity $u^a$ appearing in \eqref{perflu} has circular flow \eqref{cirflow}. In other words, we assume that $g_{ab}$ is the metric of a fluid star in dynamic equilibrium, as defined in the Introduction. In addition, we assume that $\delta g_{ab}$ satisfies the linearized Einstein-fluid equations
\be
\delta G_{ab} = 8 \pi \delta T_{ab}\,,
\ee
where $\delta T_{ab}$ takes the form of a perturbed perfect fluid \eqref{perflu} of compact spatial support. However, we impose no symmetry conditions on $\delta g_{ab}$, nor do we impose any conditions on the perturbed $4$-velocity $\delta u^a$. For convenience, we choose $\Sigma$ to be axisymmetric in the sense that $\varphi^a$ is tangent to $\Sigma$, so that the pullback of $\varphi^a\epsilon_{abcd}$ to $\Sigma$ vanishes. Using the circular flow condition \eqref{cirflow} of the background spacetime to write
\be
t^a = |v|u^a - \Omega \varphi^a \, ,
\ee 
we have\footnote{\footnotesize The vector fields $t^a$ and $\varphi^a$ are fixed (``field independent''), so $\delta t^a = \delta \varphi^a = 0$.\medskip}
\be \label{firstlaw}
\ba
  \delta M &= \int_\Sigma t^a \left[ \frac{1}{2}T^{bc}\delta g_{bc}\epsilon_{adef} - \delta\left(T_a^{\phantom{a}b}\epsilon_{bdef}\right) \right] \\
  &= \int_\Sigma \left\{ |v| u^a \left[ \frac{1}{2}T^{bc}\delta g_{bc}\epsilon_{adef} - \delta\left(T_a^{\phantom{a}b}\epsilon_{bdef}\right) \right] +  \Omega   \delta\left(\varphi^a T_a^{\phantom{a}b}\epsilon_{bdef}\right)  \right\}.
\ea
\ee

We define the particle current $3$-form, ${\bN}$, by
\begin{equation} \label{Ndefn}
  N_{def} = nu^a\epsilon_{adef}\,,
\end{equation}
the entropy current $3$-form, ${\bS}$, by
\be
 S_{def} = s N_{def}.
\ee
and the angular momentum current $3$-form, ${\bJ}$, by
\be
  J_{def} = \varphi^a T_a^{\phantom{a}b}\epsilon_{bdef}\,.
\ee
The following lemma allows us to write the first two terms in \eqref{firstlaw} a convenient form:

\begin{lemma}\label{fluididentity}
For any smooth one parameter family of Einstein-perfect fluid field configurations (not necessarily satisfying the field equations), we have
\be \label{fluidid}
u^a(\lambda) \left[ \frac{1}{2}T^{bc}(\lambda)\frac{d g_{bc}}{d\lambda}\epsilon_{adef}(\lambda) - \frac{d\left(T_a^{\phantom{a}b}\epsilon_{bdef}\right)}{d\lambda} \right] = \mu(\lambda) \frac{d N_{def}}{d\lambda} + T(\lambda) \frac{dS_{def}}{d\lambda},
\ee
\end{lemma}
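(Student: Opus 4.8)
The plan is to prove \eqref{fluidid} by direct computation, treating it as a purely pointwise algebraic identity. Note that the lemma explicitly allows configurations that need not satisfy the field equations, so no use may be made of $\nabla_a T^{ab}=0$ or $\nabla_a(nu^a)=0$; the only admissible inputs are the perfect-fluid form \eqref{perflu} of $T_{ab}$, the normalization $u^a u_a=-1$, and the thermodynamic relations \eqref{gibbsduhem} and \eqref{loc1st} that define the fluid at each $\lambda$. Writing $\delta\equiv d/d\lambda$, the strategy is to reduce both sides to the common canonical form (a multiple of $u^b\epsilon_{bdef}$ plus a multiple of $\delta u^b\,\epsilon_{bdef}$) and then check that the two scalar coefficients agree. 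The engine that forces the match is the pair of thermodynamic relations: Gibbs--Duhem in the form $\rho+p=\mu n+Tsn$ and the differentiated local first law $\delta\rho = T\,\delta(ns)+\mu\,\delta n$.

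For the left-hand side I would first record three elementary variational facts used repeatedly: $\delta\epsilon_{bdef}=\tfrac12 g^{mn}\delta g_{mn}\,\epsilon_{bdef}$; the contraction $u^a T_a{}^b=-\rho\,u^b$, immediate from \eqref{perflu}; and, from $\delta(g_{ab}u^a u^b)=0$, the constraint identity $u_b\,\delta u^b=-\tfrac12 u^m u^n\delta g_{mn}$, whence $u^a\delta u_a=\tfrac12 u^m u^n\delta g_{mn}$. Expanding $\delta(T_a{}^b\epsilon_{bdef})=(\delta T_a{}^b)\epsilon_{bdef}+\tfrac12 g^{mn}\delta g_{mn}\,T_a{}^b\epsilon_{bdef}$ and contracting with $u^a$, the $\delta\epsilon$ piece becomes $+\tfrac12\rho\,g^{mn}\delta g_{mn}\,u^b\epsilon_{bdef}$ after using $u^a T_a{}^b=-\rho u^b$. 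Expanding $\delta T_a{}^b$ directly from $T_a{}^b=(\rho+p)u_a u^b+p\,\delta_a^b$ and contracting with $u^a$ (using the normalization identity) collapses the many terms to $u^a\delta T_a{}^b=-\delta\rho\,u^b+\tfrac12(\rho+p)u^m u^n\delta g_{mn}\,u^b-(\rho+p)\delta u^b$. Combining these with the first term $\tfrac12 T^{bc}\delta g_{bc}\,u^a\epsilon_{adef}$, the pure-$\delta g$ contributions proportional to $u^m u^n\delta g_{mn}$ cancel and I expect the left-hand side to reduce to
\be
\left[\delta\rho + \tfrac12 (\rho+p)\, g^{mn}\delta g_{mn}\right] u^b \epsilon_{bdef} + (\rho+p)\,\delta u^b\, \epsilon_{bdef}\, .
\ee

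For the right-hand side I would simply differentiate $N_{def}=n u^a\epsilon_{adef}$ and $S_{def}=sn\,u^a\epsilon_{adef}$, again using $\delta\epsilon_{adef}=\tfrac12 g^{mn}\delta g_{mn}\,\epsilon_{adef}$, so that $\mu\,\delta N_{def}+T\,\delta S_{def}$ has coefficient of $u^a\epsilon_{adef}$ equal to $\mu\,\delta n+T\,\delta(sn)+\tfrac12(\mu n+Tsn)g^{mn}\delta g_{mn}$ and coefficient of $\delta u^a\epsilon_{adef}$ equal to $\mu n+Tsn$. The first law turns $\mu\,\delta n+T\,\delta(sn)$ into $\delta\rho$, and Gibbs--Duhem turns both remaining coefficients $\mu n+Tsn$ into $\rho+p$, reproducing exactly the canonical form above. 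I do not expect any genuine obstacle: the only nontrivial inputs are the two thermodynamic identities, and everything else is bookkeeping. The one place to be careful is the $u^a u_a=-1$ constraint, which is needed both to simplify $u^a T_a{}^b$ and to evaluate $u^a\delta u_a$; in particular the $\delta u$ contributions cannot be discarded, since the family need not preserve any gauge or normalization convention on the variation, and it is precisely the $(\rho+p)\,\delta u^b\epsilon_{bdef}$ term that is matched by the velocity variations of $\bN$ and $\bS$.
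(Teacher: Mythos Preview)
Your proposal is correct. Both your argument and the paper's are direct pointwise computations that rely only on the perfect-fluid form of $T_{ab}$, the normalization $u^a u_a=-1$, Gibbs--Duhem, and the local first law, so in substance they are the same proof.

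The organization differs, however. You expand both sides into the two-term ``basis'' $\{u^b\epsilon_{bdef},\ \delta u^b\epsilon_{bdef}\}$ and match scalar coefficients at the end, invoking the first law in the form $\delta\rho=\mu\,\delta n+T\,\delta(sn)$ and Gibbs--Duhem in the form $\rho+p=\mu n+Tsn$. The paper instead begins from $u^a T_a{}^b\epsilon_{bdef}=-(\mu+sT)N_{def}+p\,u^a\epsilon_{adef}$ (i.e., it applies Gibbs--Duhem \emph{before} differentiating), uses the Leibniz split $u^a\delta(\cdot)=\delta(u^a\cdot)-(\delta u^a)(\cdot)$, and handles the $p$-term via the dual first law $dp=n\,d\mu+ns\,dT$; the combination $\mu\,\delta\bN+T\,\delta\bS$ then appears directly in the course of the computation rather than by a final matching step. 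Your route is slightly more mechanical and perhaps easier to verify line by line; the paper's route makes the thermodynamic content of the identity (that the $\bN$ and $\bS$ variations emerge from differentiating the Gibbs--Duhem decomposition of $\rho+p$) more transparent.
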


\begin{proof}
Using the perfect fluid form of the stress energy \eqref{perflu}, the relation \eqref{gibbsduhem}, and the local first law of thermodynamics \eqref{loc1st} in the alternative form
\be
dp=n d\mu + nsdT,
\ee
we calculate
\be
\ba
\frac{d}{d\lambda}&\left(u^a  T_a^{\phantom ab} \epsilon_{bdef} \right) = \frac{d}{d\lambda}\left[-(\rho+p) u^a \epsilon_{adef} + p u^a \epsilon_{adef} \right] \\
&= \frac{d}{d\lambda}\left[-(\mu+sT) n u^a \epsilon_{adef} + p u^a \epsilon_{adef} \right] \\
&= -\mu \frac{d}{d\lambda}\left(n u^a \epsilon_{adef}\right)-T\frac{d}{d\lambda}\left(s n u^a \epsilon_{adef}\right) - \left(\frac{d\mu}{d\lambda}+s\frac{dT}{d\lambda}\right)n u^a \epsilon_{adef}+ \frac{d}{d\lambda} \left( p u^a \epsilon_{adef} \right) \\
&= -\mu \frac{dN_{def}}{d\lambda}-T\frac{dS_{def}}{d\lambda} - \frac{dp}{d\lambda} u^a \epsilon_{adef}+ \frac{d}{d\lambda} \left( p u^a \epsilon_{adef} \right) \\
&= -\mu \frac{dN_{def}}{d\lambda}-T\frac{dS_{def}}{d\lambda} +p\frac{du^a}{d\lambda} \epsilon_{adef} +\frac{1}{2}p g^{bc} \frac{d g_{bc}}{d\lambda}  u^a \epsilon_{adef} \, .
\ea
\ee
Furthermore, from 
\be
u_a \frac{du^a}{d\lambda} = -\frac{1}{2}u^b u^c \frac{dg_{bc}}{d\lambda} \, ,
\ee
we have
\be
\ba
\frac{du^a}{d\lambda}T_a^{\phantom ab}\epsilon_{bdef} &= (\rho+p) u^b \epsilon_{bdef} u_a \frac{du^a}{d\lambda} + p \frac{du^a}{d\lambda} \epsilon_{adef}\\
&= -\frac{1}{2}(\rho+p) u^b u^c \frac{d g_{bc}}{d\lambda} u^a \epsilon_{adef} + p \frac{du^a}{d\lambda} \epsilon_{adef} \, .
\ea
\ee
Thus, we obtain
\be
\ba
u^a\frac{d}{d\lambda}&\left(  T_a^{\phantom ab} \epsilon_{bdef} \right) = \frac{d}{d\lambda}\left(u^a  T_a^{\phantom ab} \epsilon_{bdef} \right) - \frac{du^a}{d\lambda}T_a^{\phantom ab}\epsilon_{bdef} \\
&= -\mu \frac{dN_{def}}{d\lambda}-T\frac{dS_{def}}{d\lambda} +\frac{1}{2}p g^{bc} \frac{d g_{bc}}{d\lambda}  u^a \epsilon_{adef}  +\frac{1}{2}(\rho+p) u^b u^c \frac{d g_{bc}}{d\lambda} u^a \epsilon_{adef} \\
&= -\mu \frac{dN_{def}}{d\lambda}-T\frac{dS_{def}}{d\lambda} +\frac{1}{2}T^{bc} \frac{d g_{bc}}{d\lambda}  u^a \epsilon_{adef},
\ea
\ee
which completes the proof.
\end{proof}

Defining the redshifted temperature and chemical potential by
\be
\begin{aligned}
\widetilde T &= T |v|\,,\\
\widetilde\mu &= \mu  |v|\,,
\end{aligned}
\ee
we can now write \eqref{firstlaw} in the form
\be \label{firstlaw2}
\delta M = \int_\Sigma \left( \widetilde\mu \delta \bN + \widetilde T \delta \bS + \Omega\delta \bJ \right) \, .
\ee
Equation \eqref{firstlaw2} is our desired form of the first law of thermodynamics, which 
holds for arbitrary perturbations off of a fluid star in dynamic equilibrium. If a black hole also was present, there would be an additional contribution to \eqref{firstlaw2} from the black hole horizon \cite{Iyer}.

The $3$-forms $\bN$, $\bS$, and $\bJ$ are dual, respectively to the particle number current $n u^a$, the entropy current, $sn u^a$, and the angular momentum current $\varphi^b T_b^{\phantom ba}$. The first two currents are conserved by \eqref{eq:conservationofn} and \eqref{eq:conservationofs}, and, in the axisymmetric case (i.e., if $\varphi^a$ is a Killing field) we also have $\nabla_a (\varphi^b T_b^{\phantom ba}) = 0$. Thus, $\bN$ and $\bS$ are always closed
\be
d \bN = 0 \, , \quad \quad  d \bS = 0,
\ee
and in the axisymmetric case, we also have
\be
d \bJ = 0 \, .
\ee
By \eqref{N} and \eqref{S}, the total number of particles, $N$, and the total entropy, $S$, are given by
\be
N = \int_\Sigma \bN  \, , \quad \quad  S = \int_\Sigma \bS \, .
\ee
In addition, for an axisymmetric metric $g_{ab}$ that is a vacuum solution to Einstein's equation near infinity, the ADM angular momentum \eqref{JADM} is given by
\be
\ba
J &= -  \int_{S_\infty} \bQ_\varphi = - \int_\Sigma d \bQ_\varphi   \\
&= - \frac{1}{8 \pi} \int_\Sigma \nabla_e (\nabla^{[e} \varphi^{d]}) \epsilon_{dabc}  \\
&= \frac{1}{8 \pi} \int_\Sigma \varphi^e {R_e}^d \epsilon_{dabc}  \\
&= \int_\Sigma \varphi^e {T_e}^d \epsilon_{dabc}  \\
&= \int_\Sigma \bJ \, ,
\label{Jeq}
\ea
\ee
where \eqref{ncgr} was used in the second line.
Furthermore, for any first order (possibly non-axisymmetric) perturbation $\delta g_{ab}$ of an axisymmetric metric, we have
\be
\delta J = \int_\Sigma \delta \bJ \, .
\label{Jeq2}
\ee
Namely, we can write a general perturbation $\delta g_{ab}$ as a sum of perturbations satisfying $\pounds_\varphi \delta g_{ab} = i m \delta g_{ab}$. Perturbations with $m = 0$ are axisymmetric and thus satisfy \eqref{Jeq}. Perturbations with $m \neq 0$ satisfy $\delta J = 0$ and $\int_\Sigma  \delta \bJ = 0$. Thus, all first order perturbations satisfy \eqref{Jeq2}. However, at second order, in general, we have
\be
\delta^2 J \neq \int_\Sigma \delta^2 \bJ 
\ee
since gravitational radiation can now contribute to the ADM angular momentum.

As stated in the introduction, a fluid star in dynamic equilibrium is said to be in {\em thermodynamic equilibrium} if and only if $\delta S=0$ with respect to all perturbations that satisfy the linearized Einstein constraint equations and for which $\delta M = \delta N = \delta J = 0$. We conclude this section with the following theorem.

\begin{theorem}
A dynamic equilibrium configuration is in thermodynamic equilibrium if and only if $\widetilde T$, $\widetilde\mu$, and $\Omega$ are uniform throughout the star.
\end{theorem}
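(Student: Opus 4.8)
The plan is to reduce the entire statement to the first law \eqref{firstlaw2}, together with the identifications $N=\int_\Sigma\bN$, $S=\int_\Sigma\bS$, and $\delta J=\int_\Sigma\delta\bJ$ (eq.~\eqref{Jeq2}), which together express all four of $\delta M,\delta N,\delta S,\delta J$ as integrals over $\Sigma$ of the variations of the three current $3$-forms $\bN,\bS,\bJ$. The ``if'' direction is then immediate: if $\widetilde T,\widetilde\mu,\Omega$ are constant, they pull out of the integral in \eqref{firstlaw2} to give the finite form of the first law, eq.~\eqref{1stlaw}, and imposing $\delta M=\delta N=\delta J=0$ forces $\widetilde T\,\delta S=0$; since $\widetilde T>0$ by \eqref{Tpc}, we conclude $\delta S=0$, i.e.\ thermodynamic equilibrium.

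For the ``only if'' direction I would first recast equilibrium as a statement about linear functionals on the vector space of allowed perturbations. By definition, equilibrium says that the functional $\delta S$ annihilates the common kernel of the three functionals $\delta M,\delta N,\delta J$. By the elementary linear-algebra fact that a functional vanishing on the intersection of the kernels of finitely many functionals lies in their span (no topology is needed for finitely many functionals), there exist constants $a,b,c$ with $\delta S=a\,\delta M+b\,\delta N+c\,\delta J$ for \emph{every} allowed perturbation. Substituting \eqref{firstlaw2} together with $\delta N=\int_\Sigma\delta\bN$ and $\delta J=\int_\Sigma\delta\bJ$, and rearranging, yields the single identity
\[
\int_\Sigma (1-a\widetilde T)\,\delta\bS = \int_\Sigma (a\widetilde\mu+b)\,\delta\bN + \int_\Sigma (a\Omega+c)\,\delta\bJ ,
\]
valid for all allowed perturbations.

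The final step is to feed in perturbations that independently excite the three current variations in an arbitrarily small neighborhood of an interior point of the star. A variation exciting only $\delta\bS$ (with $\delta\bN=\delta\bJ=0$ on $\Sigma$) isolates the left-hand side and forces $1-a\widetilde T\equiv 0$ pointwise; hence $a\neq 0$ and $\widetilde T=1/a$ is constant. A variation that changes the particle-number current without changing the angular-momentum current then forces $a\widetilde\mu+b\equiv 0$, and one that changes the angular-momentum current at fixed particle-number current forces $a\Omega+c\equiv 0$, giving constancy of $\widetilde\mu$ and $\Omega$. (Equivalently, if the achievable triples $(\delta\bN,\delta\bS,\delta\bJ)|_\Sigma$ exhaust all smooth $3$-form triples supported in the fluid, the three coefficient functions must vanish identically.)

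The hard part will be justifying this last step, namely that the triple of densities $(\delta\bN,\delta\bS,\delta\bJ)$ can be prescribed essentially arbitrarily and independently in a small region by a genuine perturbation satisfying the linearized Einstein constraint equations. Because $\bN,\bS,\bJ$ all depend on the common fluid data $(n,s,u^a)$ and on the metric (through $\rho(n,s)$, $u_a=g_{ab}u^b$, and $\epsilon$), these variations are coupled, so one must check that the linear map from the matter data $(\delta n,\delta s,\delta u^a)$ onto the three current densities is surjective---the ``diagonal'' responses $\partial\bN/\partial n$, $\partial\bS/\partial s=\bN$, and $\partial\bJ/\partial u^\varphi$ are nonvanishing, which should suffice---and, crucially, that once the matter variation is fixed the linearized Hamiltonian and momentum constraints can be solved for a compatible metric perturbation $\delta g_{ab}$. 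This solvability (the linearized constraint operator is underdetermined-elliptic and surjective onto admissible sources) is the real technical content; everything else is bookkeeping with the first law \eqref{firstlaw2}.
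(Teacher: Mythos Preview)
Your proposal is correct and lands on exactly the same technical core as the paper, though you organize the ``only if'' direction a bit differently. The paper argues by direct construction: assuming one of $\widetilde T,\widetilde\mu,\Omega$ is nonuniform, it builds (in two steps) a perturbation with $\delta M=\delta N=\delta J=0$ but $\delta S\neq 0$. You instead invoke the span lemma to write $\delta S=a\,\delta M+b\,\delta N+c\,\delta J$ and then read off pointwise constancy from the resulting integral identity. Both arguments are valid and both reduce to precisely the same ``hard part'' you identify: that the pullbacks $(\overline{\delta\bN},\overline{\delta\bS},\overline{\delta\bJ})$ can be prescribed arbitrarily (among smooth axisymmetric data supported in the star) while still solving the linearized constraints. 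The paper isolates this as Lemma~A.1 in Appendix~\ref{constraintsproof}, where it is proved by an explicit conformal/transverse ansatz \eqref{ansatz} for $(\delta h_{ij},\delta\pi^{ij})$: one first solves an elliptic vector equation \eqref{Feq} for the $\varphi$-momentum constraint, then a scalar Helmholtz-type equation \eqref{Hconstraint} (with nonnegative potential, hence uniquely solvable) for the Hamiltonian constraint, and finally observes that the remaining momentum constraint is satisfied with $\delta u^i_\perp=0$ by a reflection-parity argument. Your sketch of why surjectivity onto the matter currents should hold is on the right track; the genuine work is, as you say, in the constraint solvability, and the paper's proof supplies exactly that.
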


\begin{proof} The proof of the ``if'' part is entirely straightforward: If $\widetilde T$, $\widetilde\mu$, and $\Omega$ are constant throughout the star, the first law \eqref{firstlaw2} reduces to
\begin{align} 
\delta M &= \widetilde\mu \int_\Sigma \delta \bN + \widetilde T \int_\Sigma \delta \bS 
+ \Omega \int_\Sigma \delta \bJ  \nonumber  \\
&= \widetilde\mu \delta N + \widetilde T \delta S + \Omega \delta J \, ,
\label{firstlaw3}
\end{align}
from which it is immediately obvious that $\delta S = 0$ for any perturbation with $\delta M = \delta N = \delta J = 0$. 

The proof of the ``only if" part would be similarly straightforward if one could freely choose the quantities 
$\overline {\delta\bN}$, $\overline {\delta\bS}$, and $\overline {\delta\bJ}$, where an overline denotes the pullback to $\Sigma$. If that were the case, it follows immediately from \eqref{firstlaw2} that if at least one of $\widetilde T$, $\widetilde\mu$, or $\Omega$ were nonuniform, one could find a perturbation with $\delta S = \delta N = \delta J = 0$ but with $\delta M \neq 0$. Since $\widetilde T > 0$ (see \eqref{Tpc}), one could then find a second perturbation with 
$\overline {\delta\bN} = \overline {\delta\bJ} = 0$ (and, hence, $\delta N = \delta J = 0$) but $\overline {\delta\bS} \neq 0$ in such a way that $\delta S \neq 0$ and $\delta M \neq 0$. By combining these two perturbations one can then obtain a perturbation with $\delta M = \delta N = \delta J = 0$ but $\delta S \neq 0$, thereby showing that a dynamic equilibrium star with nonuniform $\widetilde T$, $\widetilde\mu$, or $\Omega$ cannot be in thermodynamic equilibrium.

However, the Einstein constraint equations impose nontrivial restrictions on the allowed perturbations, and, {\em a priori}, it is far from obvious that $\overline {\delta\bN}$, $\overline {\delta\bS}$, and $\overline {\delta\bJ}$ can be chosen freely. Nevertheless, we prove in appendix \ref{constraintsproof} that this is the case: On a $t-\varphi$ reflection invariant Cauchy surface $\Sigma$ of the background spacetime, a solution to the linearized Einstein constraint equations always can be found for any given axisymmetric specifications of $\overline {\delta\bN}$, $\overline {\delta\bS}$, and $\overline {\delta\bJ}$. This completes the proof.
\end{proof}

We note the following simple corollary:

\begin{corollary}
A dynamical equilibrium configuration is in thermodynamic equilibrium if and only if $\delta M=0$ 
for all perturbations that satisfy the linearized Einstein constraint equations and for which $\delta S = \delta N = \delta J = 0$.
\end{corollary}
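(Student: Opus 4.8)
The plan is to deduce the corollary directly from the Theorem together with the first law, observing that the stated condition and the definition of thermodynamic equilibrium are two ``dual'' characterizations, each equivalent to the uniformity of $\widetilde T$, $\widetilde\mu$, and $\Omega$. I would establish the two directions separately.

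For the ``if'' direction, suppose the configuration is in thermodynamic equilibrium. By the Theorem, $\widetilde T$, $\widetilde\mu$, and $\Omega$ are then uniform, so the first law reduces to the form \eqref{firstlaw3},
\be
\delta M = \widetilde\mu \, \delta N + \widetilde T \, \delta S + \Omega \, \delta J \, .
\ee
Hence any perturbation satisfying the linearized constraints with $\delta S = \delta N = \delta J = 0$ automatically has $\delta M = 0$, which is the asserted condition.

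For the ``only if'' direction, I would argue by contraposition: assuming the configuration is \emph{not} in thermodynamic equilibrium, I would exhibit a perturbation satisfying the linearized constraints with $\delta S = \delta N = \delta J = 0$ but $\delta M \neq 0$. By the Theorem, failure of thermodynamic equilibrium means at least one of $\widetilde T$, $\widetilde\mu$, $\Omega$ is nonuniform. Using the free specifiability of $\overline{\delta\bN}$, $\overline{\delta\bS}$, and $\overline{\delta\bJ}$ established in appendix \ref{constraintsproof}, I would choose the pullbacks so that each integrates to zero over $\Sigma$---giving $\delta N = \delta S = \delta J = 0$---while arranging, say in the case of nonuniform $\Omega$, that $\overline{\delta\bN} = \overline{\delta\bS} = 0$ and $\int_\Sigma \Omega\, \overline{\delta\bJ} \neq 0$ (possible precisely because $\Omega$ is nonuniform); the general form \eqref{firstlaw2} of the first law then yields $\delta M \neq 0$. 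The analogous constructions using nonuniformity of $\widetilde\mu$ or $\widetilde T$ handle the remaining cases, and this contradicts the hypothesized condition.

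The only substantive point to check is thus the free specifiability of the three pullbacks subject only to the vanishing of their integrals, which is exactly the content invoked in the proof of the Theorem and proved in appendix \ref{constraintsproof}. Since that result is already in hand, I expect no genuinely new analysis to be required, and the corollary follows as an immediate dual reformulation of the equilibrium condition.
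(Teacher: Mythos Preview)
Your argument is correct and follows essentially the same route as the paper, which simply invokes ``the same type of argument as in the proof of the theorem'' to show that the $\delta M=0$ condition is equivalent to uniformity of $\widetilde T$, $\widetilde\mu$, and $\Omega$. One minor presentational slip: you have interchanged the labels ``if'' and ``only if'' (assuming thermodynamic equilibrium and deducing $\delta M=0$ is the \emph{only if} direction of the stated biconditional), but the mathematical content of both directions is right.
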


\begin{proof}
By the same type of argument as in the proof of the theorem, we have $\delta M=0$ 
for all perturbations that satisfy the linearized Einstein constraint equations and for which $\delta S = \delta N = \delta J = 0$ if and only if $\widetilde T$, $\widetilde\mu$, and $\Omega$ are uniform throughout the star.
\end{proof}

\section{Lagrangian Formulation of Perfect Fluids: Symplectic Structure, Phase Space, and Trivial Displacements} \label{Lagrangianform}

\subsection{Lagrangian and Symplectic Form}
\label{LagrangianformA}
As discussed in the Introduction, we wish to consider fluids whose local state is characterized by a particle number density, $n$, and entropy per particle, $s$. The energy density, $\rho$, is taken to be a prescribed function of these variables, $\rho = \rho(n,s)$, and the pressure is assumed to be given by \eqref{gibbsduhem}. The physical fields describing the fluid-gravitational system are thus $n$ and $s$, together with the fluid $4$-velocity, $u^a$, and the spacetime metric, $g_{ab}$. Equivalently, we may take the physical fields to be $(N_{abc}, s,g_{ab})$, since this contains exactly the same information as $(n, u^a, s,g_{ab})$, given the definition \eqref{Ndefn} and the normalization condition $u^a u_a = -1$.

Unfortunately, it is not possible to formulate an unconstrained Lagrangian description of a perfect fluid in terms of these physical fields \cite{SchutzSorkin}. Various Lagrangian formulations can be given in which potentials or other variables are taken to be the dynamical fields, from which the physical fields can then be obtained \cite{Iyer}. We will make use of a Lagrangian formulation described by Friedman \cite{Friedman} and others, in which a diffeomorphism, $\chi$, plays the role of the dynamical variable describing the fluid.

In this formulation, one introduces a {\em fiducial} manifold, $M'$, that is diffeomorphic to the spacetime manifold, $M$.  Then one chooses\footnote{\footnotesize In this section, we will assume for simplicity that $s'$ and $\bN'$ are everywhere nonvanishing, as this will make the discussion of the phase space less cumbersome. We are, of course, primarily interested in the case where $s'$ and $\bN'$ are nonvanishing only inside a worldtube (corresponding to a ``star''), but this can be straightforwardly dealt with by redefining $\chi$ below to be a map from the worldtube into $M$. \medskip}, on $M'$, a fixed scalar field $s'$ and a fixed 3-form field $\bN'$, satisfying
\be\label{conditions}
\ba
   d\bN' &= 0 ,\\
   d(s' \bN') &= 0.
\ea
\ee
The dynamical fields consist of a metric, $g_{ab}$, on $M$ and a
diffeomorphism $\chi:M'\to M$. We denote the dynamical fields collectively by 
\be
\phi=(g_{ab},\chi) \, .  
\ee
The
physical fluid fields on $M$ are defined by pushing forward with $\chi$:
\begin{align}
  \boldsymbol{N}&\equiv\chi_\ast \boldsymbol{N}'\,,\\
  s&\equiv\chi_\ast s'\,.
\end{align}
It follows from \eqref{conditions} that the field configurations allowed in this formalism automatically satisfy conservation of particle current, \eqref{eq:conservationofn}, and conservation of entropy along worldlines, \eqref{eq:conservationofs}. Any Einstein-fluid field configuration that satisfies these two conservation laws (but not necessarily the other field equations) can be constructed in this formalism by an appropriate choice of $\bN'$ and $s'$. However, once chosen, $\bN'$ and $s'$ are required to remain fixed. Consequently, the kinematically allowed field variations are those that correspond to ``moving fluid elements around"---without changing the number of particles or the entropy in any fluid element\footnote{\footnotesize In particular, the total number of particles and the total entropy cannot be varied.\medskip}---together with arbitrary changes to the metric. These kinematical restrictions are compatible with dynamical evolution, but they restrict the variations in initial conditions that one is allowed to consider.

It should be noted that there is a redundancy in this description of the fluid---in addition to the usual diffeomorphism redundancy of general relativity. Namely, two field configurations having the same spacetime metric, $\phi = \left(g_{ab}, \chi \right)$ and $\widetilde\phi = \left(g_{ab}, \widetilde\chi \right)$, are physically equivalent if they give rise to the same $\bN$ and $s$, i.e., if $\chi_\ast \boldsymbol{N}'=\widetilde\chi_\ast \boldsymbol{N}'$ and $\chi_\ast s'=\widetilde\chi_\ast s'$, or, equivalently, if $\bN'$ and $s'$ are unchanged under $\widetilde\chi^{-1}\circ\chi$ (which is a diffeomorphism on $M'$). We say that two such field configurations are {\em trivially related}.

The Lagrangian for the Einstein-fluid system is taken to be
\begin{equation} \label{L}
  \bL = \bL_{(g)} + \bL_{(m)} = \frac{1}{16\pi}R\bepsilon-\rho(n,s) \bepsilon \, .
\end{equation}
Here $\rho(n,s)$ is the function that specifies the energy density of the fluid under consideration in terms of $(n,s)$. However, in \eqref{L}, $\rho$ is to be viewed as a function of the dynamical variables $\phi=(g_{ab},\chi)$, from which $(n,s)$---and, hence $\rho$---can be computed, given the (fixed) specification of $\bN'$ and $s'$ on the fiducial manifold $M'$.
In order to apply the constructions\footnote{\footnotesize It was assumed in the constructions of section II that all of the dynamical fields are tensor fields. The diffeomorphism $\chi$ is not a tensor field, but by introducing local coordinates on $M'$, one may view $\chi^{-1}$ as a collection of $4$ scalar fields---namely, the maps from spacetime into each of the 4 coordinates on $M'$ (see the end of appendix \ref{phasespaceproof})---thereby allowing us to treat the dynamical fields within the framework of section II.\medskip} of section II, one must consider variations
about an arbitrary field configuration $\phi$.  To do this, introduce
a one-parameter family of dynamical fields, $\phi(\lambda)=\left(g_{ab}(\lambda), \chi_\lambda\right)$. The one-parameter family of diffeomorphisms $\chi_\lambda \circ \chi_0^{-1}:M\to M$ is generated to first order by a vector field $\xi^a$ known as a {\em Lagrangian displacement}. Thus, a first order perturbation is completely specified by a pair, $\delta \phi \equiv (\delta g_{ab}, \xi^a)$, consisting of a metric perturbation and a Lagrangian displacement. The first order variations of $\bN$ and $s$ are given by
\be
\delta \bN = - \lie_\xi \bN \, , \quad \quad \quad \delta s = - \lie_\xi s \, .
\label{lagfram}
\ee
Note that a general Einstein-fluid perturbation $(\delta g_{ab}, \delta \bN, \delta s)$ can be described within our Lagrangian framework if and only if there exists a vector field $\xi^a$ such that \eqref{lagfram} holds. This will be the case \cite{Friedman} if and only if (i) there is no variation of total particle number and entropy, $\delta N = \delta S = 0$, and (ii) $\delta s/|D_a s|$ is bounded (so, in particular, $\delta s = 0$ at any point where $\nabla_a s = 0$).
In accord with the remark at the end of the previous paragraph, a first order perturbation is said to be {\em trivial} if $\delta g_{ab}=0$, $\lie_\xi \bN$ = 0, and $\lie_\xi s=0$; i.e., if all of the physical variables are unchanged by the perturbation.

Following common terminology, a first order variation, $\delta \mathcal Q$, of an arbitrary tensor quantity $\mathcal Q$ on $M$ induced by $\delta \phi$ is called an {\em Eulerian perturbation}. More generally, the  $k^\text{th}$-order {\em Eulerian perturbation} of $\mathcal Q$ is given by
\begin{equation}
  \delta^k\mathcal{Q}\equiv\left.\frac{d^k}{d\lambda^k}\mathcal{Q}(\lambda)\right|_{\lambda=0}\,.
\end{equation}
However, for many purposes, it is convenient to pull back $\phi(\lambda)=\left(g_{ab}(\lambda), \chi_\lambda\right)$ by the spacetime diffeomorphism $\chi_\lambda \circ \chi_0^{-1}$ to obtain the gauge equivalent field configuration 
$\widehat{\phi}(\lambda)=\left((\chi_\lambda \circ \chi_0^{-1})^\ast g_{ab}(\lambda), \chi_0\right)$. This corresponds to expressing the $\phi(\lambda)$ in a gauge where the location of each fluid element in spacetime does not change with $\lambda$. We define the $k^\text{th}$-order {\em Lagrangian perturbation} of $\mathcal Q$ to be the $k^\text{th}$-order perturbation of $\mathcal Q$ in this gauge, i.e.,
\begin{equation}
  \Delta^k \mathcal{Q}\equiv\left.\frac{d^k}{d\lambda^k}\Bigl((\chi_\lambda \circ \chi_0^{-1})^\ast\mathcal{Q}(\lambda)\Bigr)\right|_{\lambda=0}.
\end{equation}
The Eulerian perturbations compare
$\mathcal{Q}(\lambda)$ and $\mathcal{Q}(0)$ at the same point $P\in M$, whereas the Lagrangian perturbations can be viewed as comparing $\mathcal{Q}(\lambda)$ and $\mathcal{Q}(0)$ at the same fluid element. It follows immediately that the Lagrangian perturbations of $\bN$ and $s$ vanish at all orders
\begin{align}
  \label{eq:Ds}\Delta^k s&=0\, ,\\
  \Delta^k \boldsymbol{N}&=0 \, .
\end{align}

Since, for any tensor quantity $\mathcal Q$, the first order Lagrangian perturbation, $\Delta \mathcal Q$, differs from the first order Eulerian perturbation, $\delta \mathcal Q$, by the action of an infinitesimal diffeomorphism generated by the Lagrangian displacement $\xi^a$, we have
\be \label{lagrangianpert}
\Delta \mathcal{Q} = \delta\mathcal{Q}+\lie_\xi\mathcal{Q} \, .
\ee
As noted above, we have $\Delta s = \Delta \bN = 0$, whereas by \eqref{lagrangianpert}, we have
\begin{equation}\label{eq:Dg}
  \Delta g_{ab} = \delta g_{ab} + 2\nabla_{(a}\xi_{b)} \, .
\end{equation}
The Lagrangian
perturbation of any other physical field can thus be expressed in
terms of $\Delta g_{ab}$ and background ($\lambda=0$) quantities. In particular, we obtain
\begin{align}
  \Delta\bepsilon &=\frac{1}{2}\bepsilon g^{ab}\Delta g_{ab}\, ,\\
  \label{Deltau} \Delta u^a &= \frac{1}{2}u^a u^b u^c \Delta g_{bc}\,,\\
  \Delta n &= -\frac{1}{2}nq^{ab}\Delta g_{ab}\,,
\end{align}
where 
\be
q_{ab}\equiv g_{ab}+u_au_b
\ee
is the projector orthogonal to $u^a$ and \eqref{Ndefn} together with the normalization condition, 
\be
N_{abc}N^{abc}=6 n^2\, ,
\ee
has been used.

Returning to the Lagrangian \eqref{L}, we see that variation of the matter part yields
\be
\label{Lm}
\ba
  \delta\bL_{(m)} &= -\delta\left(\rho \bepsilon\right) = -\Delta\left(\rho \bepsilon\right) + \lie_\xi\left(\rho \bepsilon\right)
  = -\Delta(\rho\bepsilon) + d(i_\xi \rho  \bepsilon)\\
  &= -\frac{\rho+p}{n}\bepsilon\Delta n - \frac{1}{2}\rho\bepsilon g^{ab}\Delta g_{ab} + d(i_\xi \rho  \bepsilon)\\
  &= \frac{1}{2}(\rho+p)\bepsilon q^{ab}\Delta g_{ab}-\frac{1}{2}\rho\bepsilon g^{ab}\Delta g_{ab} + d(i_\xi \rho  \bepsilon)\\
&= \frac{1}{2}T^{ab}(\delta g_{ab}+2\nabla_a\xi_b)\bepsilon +  d(i_\xi \rho  \bepsilon)\\
%  &= \frac{1}{2}T^{ab}\delta g_{ab}\bepsilon - \xi_b\nabla_aT^{ab}\bepsilon + \nabla_a(\xi_bT^{ab})\bepsilon + d(\rho \xi\cdot\bepsilon)\\
%&= \frac{1}{2}T^{ab}\delta g_{ab}\bepsilon - \xi_b\nabla_aT^{ab}\bepsilon + d(\xi\cdot T\cdot\bepsilon) + d(\rho\xi\cdot\bepsilon)\\
&= \frac{1}{2}T^{ab}\delta g_{ab}\bepsilon - \xi_b\nabla_aT^{ab}\bepsilon + \nabla_a(\xi^bT_{b}^{\phantom ba})\bepsilon+d(i_\xi \rho  \bepsilon) \\
&= \frac{1}{2}T^{ab}\delta g_{ab}\bepsilon - \xi_b\nabla_aT^{ab}\bepsilon + d(i_y \bepsilon) \, ,
\ea
\ee
where 
\be
y^a = \xi^b T_{b}^{\phantom ba} + \rho \xi^a \, .
\ee
Taking account of the variation of $\bL_{(g)}$ (see \eqref{Eeom}), we see that the equations of motion obtained from $\bL$ are
\begin{align}
-\frac{1}{16 \pi} G^{ab} + \frac{1}{2} T^{ab} &= 0\,, \\
- \nabla_aT^{ab} &= 0 \, .
\end{align}
Thus, $\bL$ yields the correct Einstein-fluid equations of motion.

From \eqref{Lm}, we also may read off the matter part of the symplectic potential current $\btheta^{(m)}$
\begin{equation}
  \theta^{(m)}_{abc}(\phi,\delta\phi) = \rho \xi^d \epsilon_{dabc} + \xi^d T_{d}^{\phantom de} \epsilon_{eabc}= (\rho+p)\xi^d q_d^{\phantom{d}e}\epsilon_{eabc} = \xi^d P_{dabc},
\end{equation}
where we have defined
\be \label{Pdefn}
P_{dabc}\equiv (\rho+p)q_d^{\phantom{d}e}\epsilon_{eabc}.
\ee
As explained in section II, in order to calculate the symplectic current $\bomega$ from $\btheta$ using \eqref{sympcurrent}, we need to choose an extension of $\delta_1 \phi = (\delta_1 g_{ab},\xi_1^a)$ and $\delta_2 \phi = (\delta_2 g_{ab},\xi_2^a)$ away from the field point $\phi$ at which we are calculating $\bomega$. We choose $\delta_1 g_{ab}$ and $\delta_2 g_{ab}$ to correspond to variations along a two parameter family of metrics, $g_{ab}\left(\lambda_1, \lambda_2\right)$, and we choose $\xi^a_1$ and $\xi^a_2$ to be fixed, i.e.,
\be
\delta_1 \xi_2^a = 0 \, , \quad \quad \delta_2 \xi_1^a = 0 \, .
\ee
With this choice, we have $\delta_2 \delta_1 g_{ab} = \delta_1 \delta_2 g_{ab}$ (partial derivatives with respect to $\lambda_1$ and $\lambda_2$ commute), whereas
\be \label{twodeltas}
\ba
\delta_1\delta_2 s - \delta_2\delta_1 s &= -\delta_1 \left( \xi_2^a \nabla_a s \right) + \delta_2 \left( \xi_1^a \nabla_a s \right) \\
&= - \xi_2^a \nabla_a \delta_1s  + \xi_1^a \nabla_a \delta_2s  \\
&=  \xi_2^a \nabla_a \left( \xi_1^b \nabla_b s\right)  -  \xi_1^a \nabla_a \left( \xi_2^b \nabla_b s\right) \\
&= [\xi_2,\xi_1]^b \nabla_b s \\
&= -\lie_{[\xi_1,\xi_2]} s,
%\delta_1 \delta_2 \bN - \delta_2 \delta_1 \bN &=- \delta_1 \lie_{\xi_2} \bN +\delta_2 \lie_{\xi_1} \bN \\
%&= - \delta_1 \left[d\left(i_{\xi_2} \bN \right)\right] +\delta_2 \left[d\left(i_{\xi_1} \bN \right)\right]\\
%&= -  d\left(i_{\xi_2} \delta_1\bN \right) +d\left(i_{\xi_1} \delta_2 \bN \right)\\
%&= -  d\left(i_{\xi_2} \delta_1\bN \right) +d\left(i_{\xi_1} \delta_2 \bN \right)\\
\ea
\ee
and similarly,
\be
\delta_1 \delta_2 \bN - \delta_2 \delta_1 \bN =  - \lie_{[\xi_1,\xi_2]} \bN \, .
\ee
Thus, the perturbation $\delta_1\delta_2\phi - \delta_2\delta_1\phi$ is given (at $\phi$) by $\left(\delta g_{ab} = 0, \xi^a = [\xi_1,\xi_2]^a\right)$. 

We now have all we need to calculate the matter part of the symplectic current:
\be
\label{sympcur}
\ba
  \omega^{(m)}_{abc}(\phi;\delta_1\phi,\delta_2\phi) &= 
\delta_1 \theta^{(m)}_{abc}\left(\phi, \delta_2 \phi\right) - \delta_2 \theta^{(m)}_{abc}\left(\phi, \delta_1 \phi\right) - \theta^{(m)}_{abc}\bigl(\phi, \delta_1\delta_2 \phi - \delta_2\delta_1 \phi \bigr) \\
%&=\delta_1\left(\xi_2^d P_{dabc}\right) - \delta_2\left(\xi_1^d P_{dabc}\right) -[\xi_1,\xi_2]^d P_{dabc}\\
  &= \xi_2^d \delta_1 P_{dabc} - \xi_1^d \delta_2 P_{dabc} - [\xi_1,\xi_2]^d P_{dabc} \, .
\ea
\ee
Thus, the symplectic form \eqref{sympform2} is given by
\be \label{eq:symplecticform}
\ba
W[\phi;\delta_1\phi,\delta_2\phi]&=W^{(g)}[g_{ab};\delta_1g_{ab},\delta_2g_{ab}]+W^{(m)}[\phi;\delta_1\phi,\delta_2\phi]\\
&= \frac{1}{16\pi} \int_\Sigma \Bigl[ (\delta_2 h_{ij})(\delta_1 \pi^{ij}_{\phantom{ij}klm}) - (\delta_1 h_{ij})(\delta_2 \pi^{ij}_{\phantom{ij}klm}) \Bigr]\\
&\qquad\,+ \int_\Sigma \Bigl[ \xi_2^a \delta_1 P_{aklm} - \xi_1^a \delta_2 P_{aklm} - [\xi_1,\xi_2]^a P_{aklm} \Bigr] , 
\ea
\ee
where we have used the well known expression \cite{LeeWald,IyerWald,HollandsWald} for the contribution of $\bL_{(g)}$ to the symplectic current and
\be
\pi^{ij}_{\phantom{ij}klm} = \left(K^{ij}-h^{ij} K\right) \widehat\epsilon_{klm}
\ee
is the usual canonical momentum of general relativity, with $K_{ij}$ being the extrinsic curvature and $\widehat\bepsilon$ being the induced volume 3-form on $\Sigma$.

\subsection{Phase Space}
\label{LagrangianformB}

Following the prescription of Lee and Wald \cite{LeeWald}, phase space is constructed by factoring the space of all fields $(g_{ab}, \chi)$ on spacetime by the degeneracies of $W$, i.e., phase space is the space of equivalence classes of field configurations, where two field configurations are equivalent if they lie on an orbit of degeneracy directions of $W$. In vacuum general relativity, where $W$ is given by the first term on the right side of the second equality of \eqref{eq:symplecticform}, it follows immediately that phase space may be identified with the space of the fields (``initial data'')  $(h_{ij},\pi^{ij}_{\phantom{ab}klm})$ on $\Sigma$. However, on account of the presence of the commutator term in \eqref{eq:symplecticform}, it is not as straightforward to determine the phase space of the Einstein-fluid system.

To describe the phase space of the Einstein-fluid system, it is useful to introduce the $3$-manifold, $\Sigma'$, of ``fiducial flowlines'' on $M'$, defined as follows: Since $\bN'$ is a $3$-form on the fiducial $4$-manifold $M'$, there exists a nonvanishing vector field $U^{\prime a'}$ on $M'$---unique up to scaling at each point---such that $i_{U'} \bN' = 0$. The integral curves of $U^{\prime a'}$ are uniquely determined by $\bN'$ as unparameterized curves. We define $\Sigma'$ to be the manifold of orbits of $U^{\prime a'}$. We note that a Lagrangian formulation of the Einstein-fluid system---essentially equivalent to ours---can be given \cite{HawkingEllis} by taking the dynamical variable to be a smooth map from $M$ into $\Sigma'$ (rather than $M'$) with the additional requirement that the restriction of this map to any Cauchy surface be a diffeomorphism.

It is shown in appendix \ref{phasespaceproof} that the phase space of the Einstein-fluid system may be identified with the space of quantities $(h_{ij},  \pi^{ij}_{\phantom{ij}klm}, \psi, u^i)$ on $\Sigma$, where $u^i=h^{ia} u_a$ is the fluid 3-velocity, and $\psi$ is the diffeomorphism from $\Sigma'$ to $\Sigma$ obtained by intersecting with $\Sigma$ the images under $\chi$ of the fiducial flowlines. The statement that the phase space is given by $(h_{ij},  \pi^{ij}_{\phantom{ij}klm}, \psi, u^i)$ on $\Sigma$ is equivalent (by definition) to the statement that $\delta\phi$ is a degeneracy of $W$ if and only if $0=\delta h_{ij}=\delta \pi^{ij}_{\phantom{ij}klm} = \delta \psi = \delta u^i$ on $\Sigma$. Note that $\delta \psi = 0$ if and only if the Lagrangian displacement vector field $\xi^a$ on $\Sigma$ is parallel to the background $4$-velocity $u^a$, i.e., if and only if $q^a_{\phantom a b}\xi^b = 0$ on $\Sigma$.

Although $(h_{ij},  \pi^{ij}_{\phantom{ij}klm}, \psi, u^i)$ on $\Sigma$ provide coordinates on phase space, they are not ``canonically conjugate coordinates,'' as can be seen from the fact that the symplectic product of two pure $\psi$ perturbations does not vanish in general. For the purpose of introducing a Hilbert space structure on perturbations, it is useful to introduce canonically conjugate coordinates $(q^\alpha, p_\alpha)$ such that $W$ takes the form
\be
W[\phi; \delta_1 \phi, \delta_2 \phi] =  \int_\Sigma \sum_\alpha (\delta_2 q^\alpha \cdot \delta_1 p_\alpha - \delta_1 q^\alpha \cdot \delta_2 p_\alpha) \, ,
\label{ccc}
\ee
where each $q^\alpha$ is a tensor field on $\Sigma$ and each $p_\alpha$ is a tensor density on $\Sigma$ dual to $q^\alpha$. In appendix \ref{phasespaceproof} we show how to obtain such canonically conjugate variables by representing the dynamical diffeomorphism $\chi$ as a collection of scalar fields. As seen in appendix \ref{phasespaceproof}, $q^\alpha$ consists of the perturbation to the spatial metric, $\delta h_{ij}$, together with $3$ scalar fields representing the fluid perturbation, but the explicit form of \eqref{ccc} is not needed here.

Using such canonically conjugate coordinates, we can define a Hilbert space structure $\mathcal K$ on perturbations by introducing the $L^2$ inner product\footnote{\label{vol}\footnotesize If $q^\alpha$ is a tensor and $p_\alpha$ is a tensor density with dual indices as assumed above, then no volume element need be specified in \eqref{ccc}. However, a volume element must be specified in \eqref{Kdef}. If we take the volume element in \eqref{Kdef} to be a fixed volume element on $\Sigma$, then the term $|q^\alpha|^2$ should be multiplied by $h^{1/2}$ and the term $|p_\alpha|^2$ should be multiplied by $h^{-1/2}$ where $h$ denotes the determinant of the background spatial metric $h_{ab}$ on $\Sigma$ with respect to the fixed volume element. For notational simplicity, we have ignored these factors, since, for any fixed background, we may assume that $h=1$.\medskip}
\be
\langle \delta_1 \phi, \delta_2 \phi \rangle =  \int_\Sigma \sum_\alpha (\delta_1 q^\alpha \cdot \delta_2 q^\alpha + \delta_1 p_\alpha \cdot \delta_2 p_\alpha) \, ,
\label{Kdef}
\ee
where ``$\cdot$'' now denotes contraction of all tensor indices after using the background metric $h_{ab}$ on $\Sigma$ to raise and lower indices. Thus, the elements of $\mathcal K$ are the square integrable tensor fields $(q^\alpha, p_\alpha)$ on $\Sigma$. Note that perturbations for which $\delta M \neq 0$ fall off too slowly to be square integrable, but $\mathcal K$ contains all perturbations of interest for which $\delta M = 0$.

By inspection of \eqref{ccc} and \eqref{Kdef}, it can be seen that $W$ is a bounded quadratic form on $\mathcal K$ and thus corresponds to a bounded linear map $\widehat{W} : \mathcal K \to \mathcal K$ such that
\be
W\left[\phi; \delta_1\phi, \delta_2 \phi \right] = \langle \delta_1\phi, \widehat W \delta_2 \phi \rangle.
\ee
It is not difficult to see that
\be
\widehat{W} (q^\alpha, p_\alpha) = (-p_\alpha, q^\alpha) \, ,
\label{Wmap2}
\ee
where it is understood that any tensor indices on $(q^\alpha, p_\alpha)$ are converted to the corresponding dual indices on the right side of this equation via raising and lowering with $h^{ab}$ and $h_{ab}$ and we have assumed $h=1$ (see footnote \ref{vol}). It follows immediately from \eqref{Wmap2} that $\widehat{W}^2 = - I$ and $\widehat{W}^\dagger = - \widehat{W}$, so, in particular, $\widehat{W}$ is an orthogonal map. 

Let $\mathcal S$ be any subspace of $\mathcal K$. We define the {\em symplectic complement}, ${\mathcal S}^{\perp_S}$, of $\mathcal S$ by
\be
{\mathcal S}^{\perp_S} = \big\{v \in \mathcal K \big| \langle v, \widehat{W} u \rangle = 0 \,\,\, \forall u \in \mathcal S \big\} \, .
\ee
Clearly, we have ${\mathcal S}^{\perp_S} = (\widehat{W}[\mathcal S])^\perp$, where $\widehat{W}[\mathcal S]$ denotes the image of $\mathcal S$ under $\widehat{W}$ and ``$\perp$'' denotes the ordinary orthogonal complement in $\mathcal K$. Since $\widehat{W}$ is orthogonal, we have $(\widehat{W}[\mathcal S])^\perp = \widehat{W}[\mathcal S^\perp]$, and since $\widehat{W}^2 = - I$, we have
\be
({\mathcal S}^{\perp_S})^{\perp_S} = (\widehat{W}[\mathcal S^\perp])^{\perp_S} 
= \widehat{W}^2 [(\mathcal S^\perp)^\perp] = (\mathcal S^\perp)^\perp = \overline{\mathcal S} \, ,
\label{doubsym}
\ee
where the bar denotes the closure in $\mathcal K$. Thus, the double symplectic complement of any subspace is its closure.

Now let $\phi$ satisfy the equations of motion and let $X^a$ be smooth and of compact support. By \eqref{ham0}, we have for all $\delta \phi  \in \mathcal K$
\be
\langle \delta \phi, \widehat W \lie_X \phi \rangle = \int_\Sigma X^a \delta \bC_a \, .
\ee
By definition, the right side of this equation vanishes if and only if $\delta \phi$ is a weak solution of the constraint equations, $\delta \bC_a = 0$. Thus, if we take $\mathcal G$ to be the subspace of $\mathcal K$ spanned by perturbations of the form $\lie_X \phi$, we see that ${\mathcal G}^{\perp_S}$ is precisely the subspace, $\mathcal C$, of weak solutions to the constraints. Furthermore, by the general argument of the previous paragraph, we have 
${\mathcal C}^{\perp_S} = \overline{\mathcal G}$. Another way of saying this is that if we restrict the action of the original quadratic form $W$ to $\mathcal C \times \mathcal C$, it becomes degenerate precisely on (the closure of) the gauge transformations $\lie_X \phi$.

\subsection{Trivial Displacements}
\label{LagrangianformC}

We will see in the next section that it will be important important to determine the symplectic complement, $\mathcal V$, within $\mathcal C$ of the subspace of phase space perturbations corresponding to field variations of the form $(\delta g_{ab} = 0, \eta^a)$, where $\eta^a$ is 
a trivial displacement, i.e., 
\be\ba
  0 &= \delta s = -\lie_\eta s \,, \\
  0 &= \delta \bN = -\lie_\eta \bN \, .
\label{trivials}
\ea\ee
We first find the general form of a trivial displacement. Since $u^a N_{abc} = 0$, any vector field $\eta^a$ inside the star can be uniquely decomposed as
\be
\eta^a = f u^a + \frac{1}{n^2} N^{abc} H_{bc} \, ,
\ee
where $f$ is an arbitrary function and $H_{ab}$ is an arbitrary $2$-form satisfying $u^a H_{ab} = 0$. 
Since $d \bN = 0$, the necessary and sufficient condition to satisfy the second equality in \eqref{trivials} is
\be
0 = \lie_\eta \bN = d (i_\eta \bN) = 2 d {\boldsymbol H}\, ,
\label{dH}
\ee
where we have used
\be
\frac{1}{n^2} N^{abc} N_{ade} = 2 {q^{[b}}_d {q^{c]}}_e 
\ee
to calculate $i_\eta \bN$. It follows immediately that $\lie_u {\boldsymbol H} = i_u d{\boldsymbol H} + d (i_u {\boldsymbol H}) = 0$, so $\boldsymbol H$ may be viewed as a $2$-form on the manifold of orbits of $u^a$. Assuming that the star is simply connected, \eqref{dH} then yields
\be
\boldsymbol H = d \boldsymbol Z \, ,
\ee
where $\boldsymbol Z$ is an arbitrary $1$-form on the manifold of $u^a$-orbits or, equivalently, $\boldsymbol Z$ is a $1$-form on spacetime satisfying $i_u \boldsymbol Z = 0$ and $\lie_u \boldsymbol Z = 0$. Thus, the necessary and sufficient condition for $\eta^a$ to satisfy $\lie_\eta \bN = 0$ is that it be of the form
\be
\eta^a = f u^a + \frac{1}{n^2} N^{abc} \nabla_b Z_c \, ,
\label{z1}
\ee
where $Z_a$ satisfies
\be
u^a Z_a = 0 \, , \quad \lie_u Z_a = 0 \, .
\label{z2}
\ee
Since $u^a \nabla_a s = 0$, the necessary and sufficient condition for $\eta^a$ to also satisfy $\eta^a \nabla_a s = 0$ is
\be
\nabla_{[a} s \nabla_b Z_{c]} = 0 \, .
\label{z3}
\ee
Eqs. \eqref{z1}--\eqref{z3} are necessary and sufficient for $\eta^a$ to be a trivial displacement.

It should be noted that if $\nabla_a s \neq 0$ and the surfaces of constant $s$ are simply connected (i.e., spheres), then it is possible to show further that
\be
\eta^a = f u^a + \frac{1}{n^2} N^{abc} (\nabla_b s) (\nabla_c F) \, ,
\label{frtriv1}
\ee
where $\lie_u F = 0$, which is the form given in \cite{Friedman} for the case where $\nabla_a s \neq 0$. However, if the surfaces of constant $s$ are not simply connected (i.e., tori), then (a small class of) additional trivials are also allowed. Similarly, if $\nabla_a s = 0$ in an open region, then writing $Z_a$ as a sum of terms of the form $F \nabla_a G$, it can be seen from \eqref{z1} and \eqref{z2} that $\eta^a$ can be written as a sum of terms of the form \cite{Friedman}
\be
\eta^a = f u^a + \frac{1}{n^2} N^{abc} (\nabla_b F_1) (\nabla_c F_2) \, ,
\label{frtriv2}
\ee
where $\lie_u F_1 = \lie_u F_2 = 0$. However, in order to avoid dealing with these different special cases, we will use the form \eqref{z1}--\eqref{z3}, which is valid in all cases.

We now compute the symplectic product of a trivial perturbation, $(\delta g_{ab} = 0, \eta^a)$, with an arbitrary perturbation. It is not difficult to see that all of the flowline trivials, $\eta^a = f u^a$ for any $f$, are degeneracies of $W$ since they have $0=\delta h_{ij}=\delta \pi^{ij}_{\phantom{ij}klm} = q^a_{\phantom a b}\xi^b = \delta u^i$ on $\Sigma$. Thus, these trivial perturbations are not represented in phase space, i.e., they are ``factored out'' by our above construction of phase space. However, all of the nonvanishing trivials $\widetilde\eta^a$ of the form 
\be
\widetilde{\eta}^a = \frac{1}{n^2} N^{abc} \nabla_b Z_c 
\label{ntt}
\ee
with $Z_a$ satisfying \eqref{z2} and \eqref{z3} are {\em not} degeneracies of $W$. Indeed, for an arbitrary $\delta \phi$, by working in a gauge where the Lagrangian displacement is zero (which we can do since the symplectic product is gauge invariant), we have
\be
\ba \label{symportho}
W[\phi;\delta\phi,(0,\widetilde\eta^a)] %&= W[\phi;(0,\widetilde\eta^a),-\lie_\xi \phi]+W[\phi;(0,\widetilde\eta^a),\Delta \phi] \\
&= W[\phi;(\Delta g_{ab},0),(0,\widetilde\eta^a)] \\
&= \int_\Sigma \widetilde\eta^a \Delta P_{abcd} \\
&= \int_\Sigma \frac{1}{n^2} N^{aef} \nabla_e Z_f \Delta P_{abcd}\\
%&= -\int_\Sigma  \left[\nabla_e Z_f  \Delta\left( \frac{1}{n^2} N^{aef}P_{abcd}\right) -  \nabla_e Z_f P_{abcd} \Delta \left( \frac{1}{n^2} N^{aef} \right) \right]\\
&= \int_\Sigma \nabla_e Z_f \Delta\left( \frac{1}{n^2} N^{aef}P_{abcd}\right) \\
&= 6\int_\Sigma \nabla_{[b} Z_c  \Delta\left( \frac{\rho+p}{n} u_{d]} \right) \\
&= 6\int_\Sigma  Z_{[b}  \Delta\left(\nabla_c \frac{\rho+p}{n} u_{d]} \right)\\
&= \int_\Sigma  \boldsymbol Z \wedge \Delta d\left(\frac{\rho+p}{n} \boldsymbol u\right).
\ea
\ee
The 2-form $d\left[(\rho+p) \boldsymbol u /n\right]$ is known as the {\em vorticity}, so we see that a sufficient condition for symplectic orthogonality to the trivials is vanishing Lagrangian change of the vorticity\footnote{\label{trivfoot}\footnotesize A necessary (but not sufficient, unless the level surfaces of $s$ in the background are spheres) condition for symplectic orthogonality to the trivials is vanishing Lagrangian change in the quantity $ds\wedge d\left(\frac{\rho+p}{n} \boldsymbol u\right)$, known as the {\em circulation}, as can be seen by considering the particular trivials of the form $\boldsymbol Z = F ds$ in \eqref{symportho}. In the case of nonaxisymmetric perturbations, Friedman \cite{Friedman} shows that for certain backgrounds the condition of vanishing Lagrangian change in circulation is not a physical restriction---in the sense that for any perturbation $\delta\phi$, one can always find a trivial perturbation to add to $\delta\phi$ such that the sum has zero Lagrangian change in circulation. However in the axisymmetric case the corresponding condition, $\Delta j = 0$, is a physical restriction.}. This condition is necessary in open regions where $\nabla_a s=0$.

Consider, now, the case of axisymmetric trivials. Then it is easy to verify directly from \eqref{trivials} that
\be
\eta^a = f \varphi^a
\ee
is a trivial displacement\footnote{\footnotesize In fact, it can be seen from \eqref{frtriv1} that in the case where the level surfaces of $s$ in the background are spheres, any axisymmetric trivial displacement can be written in the form $\eta^a = f_1 u^a + f_2 \varphi^a$ with $\lie_u f_2 = 0$.} for any axisymmetric function $f$ satisfying $\lie_u f=0$. We now show that the time derivative, $\lie_t \eta^a$ of {\em any} axisymmetric trivial is a trivial of this form, up to the addition of a flowline trivial. To see this, we take the Lie derivative of \eqref{ntt} and use the circular flow condition \eqref{cirflow} of the background to write

\newpage

\be \label{liettrivial}
\ba
\lie_t\widetilde\eta^a &=\lie_t \left(  \frac{1}{n^2} N^{abc} \nabla_b Z_c \right) \\
&=   \frac{1}{n^2} N^{abc} \left[ \nabla_b \left(\lie_{|v|u} Z_c \right) - \lie_{\Omega\varphi}\left( \nabla_b Z_c \right) \right] \\
&=  -2\frac{1}{n^2} N^{abc}  \left(\nabla_b \Omega\right) \left(\varphi^d \nabla_{[d} Z_{c]}\right),
\ea
\ee
where in the last equality we used axisymmetry of $\widetilde\eta^a$ and the properties \eqref{z2} of $Z_a$. Since both $\nabla_a \Omega$ and $\varphi^b \nabla_{[b} Z_{a]}$ vanish when contracted with $\varphi^a$, it follows from \eqref{liettrivial} that $\lie_t\widetilde\eta^a$ must be proportional to $\varphi^a$, which establishes our claim.

Finally, in parallel to \eqref{symportho}, the symplectic product of an arbitrary axisymmetric perturbation with an axisymmetric trivial of the form $\eta^a = f \varphi^a$ is
\be
\ba \label{symportho2}
W[\phi;\delta\phi,(0,f \varphi^a)] 
&= W[\phi;(\Delta g_{ab},0),(0,f\varphi^a)] \\
&= \int_\Sigma f \varphi^a \Delta P_{abcd} \\
&= \int_\Sigma f \varphi^a \Delta \left[ \frac{\rho+p}{n}\left(n \epsilon_{abcd} + u_a N_{bcd}\right)\right] \\
&= \int_\Sigma f N_{bcd} \Delta \left( \frac{\rho+p}{n} \varphi^a u_a \right),  \\
\ea
\ee
where we have chosen $\Sigma$ to be axisymmetric so that the pullback of $\varphi^a \epsilon_{abcd}$ vanishes. Thus, in the axisymmetric case, the necessary and sufficient condition for symplectic orthogonality to the trivials of the form $\eta^a = f\varphi^a$ is 
\be
\Delta j = 0 \, ,
\ee
where
\be
j \equiv \frac{\rho+p}{n}\varphi^a u_a
\ee
has the interpretation of being the ``angular momentum per particle''. 

\section{Canonical Energy and Dynamic Stability} \label{canonicalenergy}

The {\em canonical energy} $\mathcal E$ is a bilinear form on the space of solutions to the perturbation equations, defined by
\be
\mathcal E(\delta_1 \phi, \delta_2 \phi) = W[\phi; \delta_1 \phi,  \lie_t\delta_2 \phi],
\label{canen}
\ee
where $t^a$ is the timelike Killing vector field of the stationary background. More precisely, we define $\mathcal E$ to be the quadratic form \eqref{canen} on the Hilbert space $\mathcal C \subset \mathcal K$ defined at the end of subsection \ref{LagrangianformB}, with domain taken to be the smooth elements $\delta \phi \in \mathcal C$ with suitable decay properties at infinity\footnote{\footnotesize More precisely, the domain is ${\mathcal U} \cap {\mathcal C}$ where $\mathcal U$ is the intersection of weighted Sobolev spaces analogous to those defined in \cite{HollandsWald}. This domain can be shown to be dense in $\mathcal C$ by the type of argument given in proposition 5 of \cite{HollandsWald}.\medskip}.  An explicit formula for $\mathcal E$ is given in \cite{Friedman}, and an expression for $\mathcal E$ in terms of second order variations will be obtained in the next section.

Although the definition of $\mathcal E$ is asymmetric in $\delta_1 \phi$ and  $\delta_2 \phi$, it is, in fact, symmetric in its arguments,
\be
\mathcal E(\delta_1 \phi, \delta_2 \phi) = \mathcal E(\delta_2 \phi, \delta_1 \phi).
\label{Esym}
\ee
To prove this, we note that \eqref{sympcur} expresses the symplectic current $\bomega$ in terms of the perturbations $(\delta_1 \phi, \delta_2 \phi)$ and the background physical quantities $g_{ab}$, $\rho$, $p$, and $u^a$. Since $\lie_t$ applied to the background physical quantities vanishes, we have
\be\label{Esym2}
\lie_t \bomega(\phi; \delta_1 \phi, \delta_2 \phi) = \bomega(\phi; \lie_t \delta_1 \phi, \delta_2 \phi) 
+ \bomega(\phi; \delta_1 \phi, \lie_t \delta_2 \phi)\, .
\ee
By the standard Lie derivative identity for forms, we have
\be
\lie_t \bomega = i_t d \bomega + d (i_t \bomega) = d(i_t \bomega) \,,
\ee
where the fact that $\bomega$ is closed (see \eqref{sympcons}) was used in the last equality. Integration of \eqref{Esym2} over a Cauchy surface $\Sigma$ then yields \eqref{Esym}.

In addition to its symmetry, $\mathcal E$ satisfies the following important properties: (i) $\mathcal E$ is conserved, i.e., $\mathcal E$ is independent of the choice of $\Sigma$. This follows immediately from the conservation of $W$ on solutions (see \eqref{sympcons}), given that if $\delta \phi$ is a solution, then so is $\lie_t \delta \phi$. (ii) $\mathcal E$ is gauge invariant for gauge transformations of compact support. This follows immediately from \eqref{giW}. (iii) $\mathcal E(\delta \phi, \delta \phi)$ has a positive net flux at null infinity if the perturbation is asymptotically stationary at late times. This has been shown in \cite{Friedman} and \cite{HollandsWald}.

There is one further property that we need $\mathcal E$ to satisfy in order to use the positivity of $\mathcal E$ as a necessary and sufficient condition for stability: We want $\mathcal E$ to be degenerate precisely on the linearized solutions $\delta \phi$ that are physically stationary. (Here, $\mathcal E$ is said to be {\em degenerate} on $\delta \phi$ if ${\mathcal E}(\delta \phi, \delta \phi') = 0$ for all $\delta \phi'$ in the domain of $\mathcal E$.) Below, we will define what we mean by a ``physically stationary linearized solution,'' and we will then explain why this degeneracy property is needed in order to use positivity of $\mathcal E$ as a criterion for dynamic stability. Unfortunately, we will then find that $\mathcal E$ is {\em not} degenerate on all physically stationary solutions. The cure for this difficulty will be to restrict the subspace of solutions on which $\mathcal E$ is defined so as to make it degenerate on the physically stationary solutions. As a consequence, we can only directly test dynamic stability on a restricted subspace of perturbations. Nevertheless, we will then show that, in the axisymmetric case, mode stability on this restricted subspace implies mode stability for general perturbations, including perturbations that cannot be obtained in the Lagrangian framework.

A smooth linearized solution $\delta \phi = (\delta g_{ab}, \xi^a)$ is said to be {\em physically stationary} if the physical fields $\delta g_{ab}$, $\delta \bN$, and $\delta s$ can be made stationary by a gauge transformation, i.e., if there exists a smooth vector field $X^a$, which is an asymptotic symmetry near infinity, such that
\be \label{gcond}
0 = \lie_t [\delta g_{ab} + \lie_X g_{ab}] \,,
\ee
\be 
0 = \lie_t [\delta \bN + \lie_X \bN] = \lie_t [-\lie_\xi \bN + \lie_X \bN] = -\lie_{[t,\xi-X]} \bN \,,
\label{Ncond}
\ee
\be
0 = \lie_t [\delta s + \lie_X s] = \lie_t [-\lie_\xi s + \lie_X s] = -\lie_{[t,\xi-X]} s \, .
\label{scond}
\ee
Equations \eqref{Ncond} and \eqref{scond} are equivalent to the statement that the perturbation \mbox{$(0,[t,\xi-X]^a)$} is trivial, i.e., 
\be \label{xicond}
\lie_t [ \xi^a - X^a] = \text{trivial displacement}.
\ee
Thus, $\delta \phi = (\delta g_{ab}, \xi^a)$ is physically stationary if and only if there exists a smooth vector field $X^a$, which is an asymptotic symmetry near infinity, such that
\be \label{physstat}
\lie_t \delta\phi= \left(-\lie_{[t,X]} g_{ab}, [t,X]^a\right)  + \text{trivial} \, .
\ee

Now, if it were true that $\mathcal E(\delta \phi, \delta \phi) > 0$ for all linearized solutions $\delta \phi$, then $\mathcal E$ would provide a positive definite conserved norm, thereby implying mode stability (see the discussion given in the Introduction). Since physically stationary perturbations are obviously physically stable, we would also have mode stability if we merely had $\mathcal E(\delta \phi, \delta \phi) \geq 0$ for all linearized solutions $\delta \phi$, but with equality holding only for physically stationary perturbations. In other words, it does no harm to the argument for dynamic stability if we merely have $\mathcal E(\delta \phi, \delta \phi) \geq 0$ provided that $\mathcal E$ is degenerate {\em only} on physically stationary solutions. On the other hand, we need $\mathcal E$ to be degenerate on {\em all} physically stationary solutions in order to argue for instability in the alternative case where $\mathcal E(\delta \phi, \delta \phi) < 0$ for some linearized solution $\delta \phi$. Specifically, we need degeneracy of $\mathcal E$ on physically stationary solutions in order to obtain a contradiction with such a $\delta \phi$ asymptotically approaching a physically stationary solution at late retarded times: If $\delta \phi$ asymptotically approached a physically stationary solution, the above positive flux result would imply that $\mathcal E$ could only become more negative at late times, whereas the degeneracy of $\mathcal E$ on physically stationary solutions would imply ${\mathcal E} \to 0$, thus yielding a contradiction. Thus, we need $\mathcal E$ to be degenerate precisely on the physically stationary solutions in order to use positivity of $\mathcal E$ as a criterion for both stability and instability, i.e., to be able to prove that (i) non-negativity of $\mathcal E$ implies mode stability and (ii) failure of non-negativity implies the existence of solutions that cannot asymptote to a physically stationary final state.

What are the degeneracies of $\mathcal E$? Since $\mathcal E(\delta \phi', \delta \phi) = W(\phi; \delta \phi', 
\lie_t\delta \phi)$, it follows that $\delta \phi$ is a degeneracy of $\mathcal E$ if and only if $\lie_t \delta \phi$ is a degeneracy of $W$. As discussed at the end of subsection \ref{LagrangianformB}, when restricted to $\mathcal C$, $W$ is degenerate precisely on the gauge transformations that go to zero at infinity. Thus, the degeneracies of $\mathcal E$ are precisely the $\delta \phi$ in the domain of $\mE$ such that
\be
\lie_t \delta \phi = \left(\lie_Y g_{ab}, -Y^a\right) \,,
\label{Wdeg}
\ee
where $Y^a$ is smooth and goes to zero at infinity. Comparison of eqs. \eqref{physstat} and \eqref{Wdeg} shows that the degeneracies of $\mathcal E$ are a proper subset of the physically stationary solutions. Thus, although $\mE$ satisfies the desired property of being degenerate only on physically stationary solutions, it fails to be degenerate on all physically stationary solutions.

A cure for this difficulty is to restrict $\mathcal E$ to a smaller space, so as to make it degenerate on all physically stationary solutions. If $\delta_\text{ps}\phi$ is a physically stationary perturbation, we have, from \eqref{physstat},
\be
\ba
\mathcal E (\delta \phi, \delta_\text{ps}\phi ) &= W[\delta \phi, \lie_t 
\delta_\text{ps} \phi] \\
&= -W\left[\delta\phi, \left(\lie_{[t,X]} g, -[t,X]\right) \right] + W[\delta
\phi, \text{trivial}] \, .
\ea
\ee
Now, for a general asymptotic symmetry $X^a$, the commutator $[t,X]^a$ is, at most, an asymptotic translation (as occurs when $X^a$ is an asymptotic boost). Therefore, in order to ensure that the first term vanishes, we must restrict $\delta \phi$ so that $\delta P_i = 0$, where $\delta P_i$ denotes the ADM linear momentum (see \eqref{ham1} and \eqref{ham5}). This is an innocuous restriction on perturbations, since we can achieve this by addition of the action of an infinitesimal Lorentz boost on the background solution, so $\delta P_i = 0$ does not impose a physical restriction on the perturbations being considered. On the other hand, in order to ensure that the second term vanishes, we must restrict $\delta \phi$ so that
 \be \label{Tcond1}
W[\delta\phi, \text{trivial}] = 0
\ee
for all trivials. 

Let $\mathcal V$ be the Hilbert subspace of $\mathcal C$ composed of perturbations that have $\delta P_i = 0$ and are symplectically orthogonal to all trivials. Then $\mathcal V$ is the symplectic complement in the Hilbert space $\mathcal K$ of the subspace, $\mathcal W$, of perturbations consisting of trivials together with gauge transformations generated by vector fields $X^a$ that approach asymptotic translations at infinity. Since the double symplectic complement of $\mathcal W$ in $\mathcal K$ is simply the closure, $\overline{\mathcal W}$, of $\mathcal W$ in $\mathcal K$ (see \eqref{doubsym}), when restricted to $\mathcal V$, the degeneracies of $W$ are precisely the elements of $\overline{\mathcal W} \cap {\mathcal V}$. Furthermore, by arguments similar to given in \cite{HollandsWald} (see remark 2 of section 4 of that reference), the smooth elements of $\overline{\mathcal W}$ lie in $\mathcal W$. It follows that, when restricted to $\mathcal V$, $\mathcal E$ is degenerate precisely on the physically stationary solutions.

Putting together all of the above results and arguments, we have the following theorem:

\begin{theorem}\label{dynstabthm}
Let $\mathcal V \subset \mathcal C$ be the space of linearized solutions within the Lagrangian framework that are symplectically orthogonal to all trivial perturbations and satisfy $\delta P_i = 0$. If $\mathcal E$ is non-negative on this subspace, then one has stability on this subspace of perturbations in the sense that there do not exist any exponentially growing modes lying in this subspace. Conversely, if ${\mathcal E} (\delta \phi, \delta \phi) < 0$ for some $\delta \phi \in \mathcal V$, then one has instability in the sense that such a $\delta \phi$ cannot approach a physically stationary solution at asymptotically late times. 
\end{theorem}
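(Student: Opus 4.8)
The plan is to assemble the theorem from the structural properties of $\mathcal{E}$ established above, the decisive input being the characterization---proved in the paragraph preceding the statement---that, when restricted to $\mathcal{V}$, $\mathcal{E}$ is degenerate precisely on the physically stationary solutions. Granting this, both directions reduce to short arguments.

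For the stability direction I would first show that any mode $\delta\phi \in \mathcal{V}$ obeying $\lie_t \delta\phi = \alpha\,\delta\phi$ with $\mathrm{Re}\,\alpha > 0$ must have $\mathcal{E}(\delta\phi,\delta\phi) = 0$. If $\alpha$ is real (so $\delta\phi$ may be taken real), this is immediate from the antisymmetry of $W$: $\mathcal{E}(\delta\phi,\delta\phi) = W[\phi;\delta\phi,\lie_t\delta\phi] = \alpha\,W[\phi;\delta\phi,\delta\phi] = 0$. If $\alpha = \beta + i\gamma$ is complex, I would split $\delta\phi = \delta\phi_R + i\,\delta\phi_I$ into real and imaginary parts, which span a two-dimensional $\lie_t$-invariant space with $\lie_t\delta\phi_R = \beta\,\delta\phi_R - \gamma\,\delta\phi_I$ and $\lie_t\delta\phi_I = \gamma\,\delta\phi_R + \beta\,\delta\phi_I$. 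Using $W[\phi;\delta\phi_R,\delta\phi_R] = W[\phi;\delta\phi_I,\delta\phi_I] = 0$, one finds $\mathcal{E}(\delta\phi_R,\delta\phi_I) = \beta\,W[\phi;\delta\phi_R,\delta\phi_I]$ while $\mathcal{E}(\delta\phi_I,\delta\phi_R) = -\beta\,W[\phi;\delta\phi_R,\delta\phi_I]$; the symmetry \eqref{Esym} of $\mathcal{E}$ together with $\beta > 0$ then force $W[\phi;\delta\phi_R,\delta\phi_I] = 0$, whence $\mathcal{E}$ vanishes on $\delta\phi_R$ and $\delta\phi_I$ separately. With $\mathcal{E} \ge 0$ on $\mathcal{V}$, the vanishing of $\mathcal{E}(\delta\phi,\delta\phi)$ forces $\delta\phi$ into the kernel of $\mathcal{E}$ on $\mathcal{V}$ (for a non-negative quadratic form a null vector is a degeneracy, by expanding $0 \le \mathcal{E}(\delta\phi + \lambda\,\delta\phi',\delta\phi + \lambda\,\delta\phi')$ in $\lambda$). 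By the degeneracy characterization, $\delta\phi$ is then physically stationary; but combining \eqref{physstat} with $\lie_t\delta\phi = \alpha\,\delta\phi$ exhibits $\delta\phi$ as a sum of a gauge transformation and a trivial displacement, i.e. pure gauge, contradicting the assumption that it is a genuine exponentially growing mode. Hence no such mode exists in $\mathcal{V}$.

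For the instability direction I would argue by contradiction: suppose $\delta\phi \in \mathcal{V}$ has $\mathcal{E}(\delta\phi,\delta\phi) < 0$ yet asymptotically approaches a physically stationary solution $\delta_{\mathrm{ps}}\phi$ at late retarded times. Evaluating $\mathcal{E}$ on a family of hypersurfaces terminating at null infinity at progressively later retarded times, property (iii)---the positivity of the flux of $\mathcal{E}$ through null infinity for an asymptotically stationary perturbation---implies that $\mathcal{E}$ can only decrease, hence stays bounded above by its initial negative value. On the other hand, continuity would give $\mathcal{E} \to \mathcal{E}(\delta_{\mathrm{ps}}\phi,\delta_{\mathrm{ps}}\phi)$, and since $\delta_{\mathrm{ps}}\phi$ is a degeneracy of $\mathcal{E}$ on $\mathcal{V}$ this limit is $0$; the two statements are incompatible, so $\delta\phi$ cannot approach any physically stationary final state.

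The genuinely hard work sits upstream of the theorem: everything rests on the degeneracy characterization of $\mathcal{E}$ on $\mathcal{V}$, which follows from identifying $\mathcal{V}$ as the symplectic complement of $\mathcal{W}$ via \eqref{doubsym} together with the nontrivial fact (imported from \cite{HollandsWald}) that the smooth elements of $\overline{\mathcal{W}}$ already lie in $\mathcal{W}$. Within the proof of the theorem itself, the one delicate point is the flux argument: one must formulate ``asymptotically approaches a physically stationary solution'' so that the monotonicity of $\mathcal{E}$ under the flux and the limit $\mathcal{E} \to \mathcal{E}(\delta_{\mathrm{ps}}\phi,\delta_{\mathrm{ps}}\phi) = 0$ are both controlled on the same sequence of hypersurfaces, exactly as in \cite{Friedman,HollandsWald}.
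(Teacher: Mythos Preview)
Your proposal is correct and follows essentially the same route as the paper, which presents the theorem as a summary of the preceding discussion (``Putting together all of the above results and arguments\ldots''). Your instability argument matches the paper's flux-versus-limit contradiction verbatim; for the stability direction the paper simply invokes the standard principle that a conserved non-negative norm, degenerate only on physically stationary perturbations, rules out exponential growth, whereas you supply the explicit mode computation (antisymmetry of $W$ plus symmetry of $\mathcal{E}$ forcing $\mathcal{E}=0$ on a growing mode) that underlies that principle---a welcome elaboration but not a different strategy.
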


We note that Friedman \cite{Friedman} has shown that if $\Omega$ is not identically zero, there exist perturbations in $\mathcal V$ of sufficiently high angular quantum number $m$ such that $\mathcal E < 0$, thus establishing that all rotating stars are dynamically unstable (the CFS instability) in the sense of this theorem. Furthermore, for slowly rotating stars, all ``$r$-modes'' with $m\geq2$ have $\mathcal E<0$ and thus are unstable \cite{FriedmanMorsink, Andersson}. For slowly rotating stars, the growth timescale of the unstable modes will be very long (see footnote \ref{stabterm}), but the instability may occur on dynamically relevant timescales for rapidly rotating compact stars.

As previously mentioned in the Introduction and footnote \ref{trivfoot}, Friedman \cite{Friedman} has shown that for non-axisymmetric perturbations, restriction to $\mV$ does not impose a (significant) physical restriction on perturbations, i.e., for suitable background stars, any nonaxisymmetric perturbation can be written as the sum of a trivial perturbation and a perturbation in $\mV$. However, in the axisymmetric case, restriction to $\mV$ does impose a physical restriction on perturbations. In particular, as shown in subsection \ref{LagrangianformC}, symplectic orthogonality to trivials of the form $f \phi^a$ requires $\Delta j = 0$, which is a significant physical restriction. It is worth noting that by eqs. \eqref{firstlaw2} and \eqref{Jeq2} expressed in a gauge where $\delta = \Delta$, the condition $\Delta j = 0$ implies $\delta J = \delta M = 0$, so all perturbations in $\mV$ satisfy\footnote{\footnotesize The Hilbert space $\mathcal K$ excludes perturbations with $\delta M \neq 0$ in any case because of the failure of square integrability. \medskip} $\delta J = \delta M = 0$. It is interesting that the same condition $\delta J = \delta M = 0$ in a space directly analogous to $\mV$ also arose in the black hole stability analysis of Hollands and Wald \cite{HollandsWald}, but for completely different reasons (involving the horizon Killing field). 

On account of the physical restrictions associated with considering only perturbations in $\mV$, Theorem \ref{dynstabthm} is of rather limited utility as it stands for determining the dynamic stability of a star with respect to axisymmetric perturbations, since it gives a stability criterion only for perturbations in $\mV$. Fortunately, in the axisymmetric case, these restrictions can be removed: Mode stability for perturbations in $\mathcal V$ implies mode stability for {\em all} perturbations, including those that cannot be described within the Lagrangian displacement framework. This result is a direct consequence of the following lemma:

\begin{lemma}
Let $\delta \mathcal Q = (\delta N_{abc}, \delta s, \delta g_{ab})$ be an axisymmetric solution to the linearized Einstein-fluid equations (not necessarily arising in the Lagrangian displacement framework). Then there exists a vector field $\xi^a$ such that
\be \label{Tcondition3}
\ba
\lie_t \delta N_{abc} &= -\lie_\xi N_{abc}\,, \\
\lie_t \delta s &= -\lie_\xi s\,, \\
\lie_t \delta j &= -\lie_\xi j \, .
\ea
\ee
Thus, $\lie_t \delta \mathcal Q$ can be represented in the Lagrangian displacement framework and has $\Delta j = 0$. Furthermore $\lie^2_t \delta \mathcal Q \in \mathcal V$.
\end{lemma}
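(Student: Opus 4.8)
The plan is to produce a single $\xi^a$ satisfying all three displacement identities at once (which by the criterion recalled after \eqref{lagfram} already shows that $\lie_t\delta\mathcal Q$ lies in the Lagrangian framework and, since the third identity is $\Delta j = 0$, that it is symplectically orthogonal to the $f\varphi^a$ trivials by \eqref{symportho2}), and then to deduce $\lie_t^2\delta\mathcal Q\in\mathcal V$ by a purely symplectic argument. First I would establish representability, i.e. the existence of a $\xi^a$ solving the first two equations. By Friedman's criterion \cite{Friedman} this requires (i) that the perturbation $\lie_t\delta\mathcal Q$ carry no net change of total particle number or entropy, and (ii) boundedness of $\lie_t\delta s/|\nabla_a s|$. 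For (i), linearized conservation of $\bN$ and of $\bS=s\bN$ gives $d\,\delta\bN = d\,\delta\bS = 0$, so that $\lie_t\delta\bN = d(i_t\delta\bN)$ and $\lie_t\delta\bS = d(i_t\delta\bS)$ are exact; as $\bN$ and $\bS$ have compact spatial support, these integrate to zero over $\Sigma$. For (ii), linearized entropy conservation gives $\lie_u\delta s = -\delta u^a\nabla_a s$, and writing $t^a = |v|u^a - \Omega\varphi^a$ together with axisymmetry ($\lie_\varphi\delta s = 0$) yields $\lie_t\delta s = -|v|\,\delta u^a\nabla_a s$, manifestly proportional to $\nabla_a s$. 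Hence $\xi^a$ exists, and by averaging over the $\varphi$-orbits it may be taken axisymmetric.

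\textbf{Conservation of $\Delta j$.} Next I would record an algebraic identity for the background. Writing the specific enthalpy one-form $V_a = \frac{\rho+p}{n}u_a$, the relativistic Euler equation in Carter form reads $i_u(dV) = T\,ds$, and contracting its exterior derivative against $\varphi^a$ using axisymmetry gives
\be
dj = -\,i_\varphi(dV)\,, \qquad u^a\nabla_a j = \varphi^a\nabla_a j = 0\,.
\ee
Setting $\Delta = \delta' + \lie_\xi$ with $\delta' = \lie_t\delta$, and using $\Delta s = 0$ together with $\Delta u^a = \tfrac12 u^a u^b u^c\Delta g_{bc}\equiv f u^a$ (see \eqref{Deltau}), I would take the Lagrangian change of the Euler equation to obtain $i_u\Delta(dV) = (\Delta T - fT)\,ds$. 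Contracting with $\varphi^a$ annihilates the right-hand side since $\lie_\varphi s = 0$, and combining with the identity above gives $\lie_u\Delta j = u^a\varphi^b\Delta(dV)_{ab} = 0$: the Lagrangian change of the angular momentum per particle is constant along flow lines. This step uses essentially that $\lie_t\delta\mathcal Q$ solves the linearized Euler equation (it is the Lie derivative of a solution along the Killing field $t^a$).

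\textbf{Vanishing of $\Delta j$ — the main obstacle.} It remains to upgrade ``constant along flow lines'' to ``identically zero,'' which is exactly where the time-derivative structure is essential (for a generic Lagrangian perturbation $\Delta j$ is conserved but nonzero). I would first show the integral obstruction vanishes: since $\bJ = j\bN + p\,i_\varphi\bepsilon$ has compact spatial support and its pullback to an axisymmetric $\Sigma$ equals that of $j\bN$, and since $d(j\bN) = dj\wedge\bN = 0$, one computes $\int_\Sigma(\Delta j)\,\bN = \int_\Sigma\delta'\bJ = \int_\Sigma d(i_t\delta\bJ) = 0$, using \eqref{Jeq2}. Then, because $\Delta j$ is constant along flow lines with vanishing $\bN$-weighted integral, I would exploit the freedom $\xi \to \xi + \eta$ by a trivial displacement $\eta$ (which leaves the first two identities intact, since $\lie_\eta\bN = \lie_\eta s = 0$) to solve $\lie_\eta j = -\Delta j$. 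The hard part is the solvability of this last equation for a trivial $\eta$ of the form \eqref{z1}--\eqref{z3}: it reduces to a divergence-type equation on the quotient of the star by the flow, whose single integral obstruction is precisely $\int_\Sigma(\Delta j)\,\bN$, which we have just shown to vanish. I expect this solvability to require a dedicated argument in the spirit of the constraint-solvability appendix invoked for the Theorem. With such $\eta$, the redefined $\xi^a$ satisfies all three identities and has $\Delta j = 0$.

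\textbf{The final claim $\lie_t^2\delta\mathcal Q\in\mathcal V$.} Finally I would argue symplectically rather than by a fresh computation. For any axisymmetric trivial $\eta$, conservation and gauge invariance of $W$ give $W[\lie_t^2\delta\mathcal Q,(0,\eta)] = -\,W[\lie_t\delta\mathcal Q,(0,\lie_t\eta)]$. By the result of subsection \ref{LagrangianformC} (see \eqref{liettrivial}), $\lie_t\eta$ is a trivial of the form $f\varphi^a$ up to a flow-line trivial. Since $\lie_t\delta\mathcal Q$ has $\Delta j = 0$ it is symplectically orthogonal to all $f\varphi^a$ trivials by \eqref{symportho2}, while flow-line trivials $f u^a$ are degeneracies of $W$; hence the right-hand side vanishes. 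Thus $\lie_t^2\delta\mathcal Q$ is symplectically orthogonal to every trivial, and together with $\delta P_i = 0$ (again the vanishing of a time derivative of the conserved ADM momentum) this places $\lie_t^2\delta\mathcal Q$ in $\mathcal V$.
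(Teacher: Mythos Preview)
Your final symplectic argument for $\lie_t^2\delta\mathcal Q\in\mathcal V$ matches the paper's exactly. The divergence is in how you obtain the three displacement identities, and there your route is both more circuitous than the paper's and left incomplete.

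The paper simply writes down $\xi^a = |v|\,\delta u^a + \beta\varphi^a$, with $\beta$ any axisymmetric scalar obeying $u^a\nabla_a\beta = (\delta u^a)\nabla_a\Omega$, and verifies all three identities directly from the linearized conservation laws $\delta(u^a\nabla_a s)=0$, $\delta(u^a\nabla_a j)=0$, and $d\,\delta\bN=0$. The point you miss is that $j$, like $s$, is conserved along background flow lines ($u^a\nabla_a j=0$ in any axisymmetric equilibrium), so the very computation you already perform for $s$---namely $\lie_t\delta s = -|v|\,\delta u^a\nabla_a s$---applies verbatim to $j$. One can therefore \emph{read off} $\xi^a=|v|\,\delta u^a$ from these two lines (the $\beta\varphi^a$ term contributes nothing to $\lie_\xi s$ or $\lie_\xi j$ by axisymmetry and is present only to close the $\bN$ equation). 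No appeal to Friedman's abstract representability criterion, no Euler-equation argument for $\lie_u\Delta j=0$, and no subsequent adjustment by a trivial are needed.

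Your approach instead produces \emph{some} $\xi$ for the first two identities, shows $\Delta j$ is constant along flow lines, and then seeks a trivial $\eta$ with $\lie_\eta j=-\Delta j$. You correctly flag that last step as ``the hard part'' and do not carry it out; this is the genuine gap. The adjustment does in fact exist---precisely because the paper's explicit $\xi$ exists and differs from your abstract one by a trivial---but proving solvability without that construction is not easier than the direct route. In particular, when $\nabla_a j$ is everywhere a multiple of $\nabla_a s$, every trivial has $\lie_\eta j=0$ and your equation is unsolvable unless $\Delta j=0$ already; showing it \emph{is} zero in that case again amounts to identifying the right $\xi$.
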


\begin{proof}
Let
\be
\xi^a = |v| \delta u^a + \beta \varphi^a,
\ee
where $v^a = t^a + \Omega \varphi^a$ and $\beta$ is any axisymmetric scalar that satisfies 
\be
u^a \nabla_a \beta = (\delta u^a)\nabla_a \Omega \, .
\ee
The perturbed conservation of entropy equation yields
\be
\ba
0 &= \delta (u^a \nabla_a s ) = u^a \nabla_a \delta s + (\delta u^a)\nabla_a s \\
&= \frac{1}{|v|} (t^a + \Omega \varphi^a) \nabla_a \delta s +  (\delta u^a)\nabla_a s = \frac{1}{|v|} t^a \nabla_a \delta s +  (\delta u^a)\nabla_a s
\ea
\ee
where we have used axisymmetry of the perturbation in the last step. Thus, we have
\be
\lie_t \delta s = -|v|(\delta u^a) \nabla_a s =- \lie_\xi s.
\ee
An identical calculation using the perturbed conservation of angular momentum equation, $\delta(u^a \nabla_a j)=0$, shows
\be
\lie_t \delta j = - \lie_\xi j.
\ee
Finally, the perturbed conservation of particle number yields $\delta( d \boldsymbol{N}) = d(\delta \boldsymbol N) = 0$, so 
\be
\ba
\lie_t \delta \boldsymbol{N} &= t \cdot d(\delta \boldsymbol{N}) + d (t \cdot \delta \boldsymbol N) 
= d\left[ (|v|u-\Omega \varphi)\cdot \delta \boldsymbol N \right] \\
&= d\left[ |v|\delta(u\cdot \boldsymbol N)-|v|(\delta u)\cdot \boldsymbol N - \Omega \varphi\cdot \delta \boldsymbol N \right]\\
&= d\left[- \xi\cdot \boldsymbol N +\beta \varphi \cdot \boldsymbol N - \Omega \varphi \cdot \delta \boldsymbol N  \right]\\
&= -\lie_\xi \boldsymbol N + d\left[\varphi \cdot \left(\beta \boldsymbol N - \Omega \delta \boldsymbol N  \right)  \right]\\
&= -\lie_\xi \boldsymbol N + \lie_\varphi\left(\beta \boldsymbol N - \Omega \delta \boldsymbol N  \right)-\varphi \cdot d\left(\beta \boldsymbol N - \Omega \delta \boldsymbol N  \right)\\
&=-\lie_\xi \boldsymbol N -\varphi \cdot d\left(\beta \boldsymbol N - \Omega \delta \boldsymbol N  \right).
\ea
\ee
But $d\left(\beta \boldsymbol N - \Omega \delta \boldsymbol N  \right)$ is a 4-form, so 
\be
\ba
&\varphi \cdot d\left(\beta \boldsymbol N - \Omega \delta \boldsymbol N  \right)=0 \\
\Longleftrightarrow\quad &d\left(\beta \boldsymbol N - \Omega \delta \boldsymbol N  \right)=0\\
\Longleftrightarrow\quad & \boldsymbol{N}\wedge d\beta - \delta\boldsymbol{N}\wedge d\Omega = 0\\
\Longleftrightarrow\quad & \epsilon^{abcd}N_{bcd}\nabla_a \beta - \epsilon^{abcd}\delta N_{bcd}\nabla_a \Omega = 0\\
\Longleftrightarrow\quad & n u^a \nabla_a \beta - \left( u^a \delta n +n \delta u^a+\frac{1}{2} nu^a g^{bc}\delta g_{bc} \right)\nabla_a \Omega = 0 \\
\Longleftrightarrow\quad & u^a \nabla_a \beta = (\delta u^a)\nabla_a \Omega.
\ea
\ee
But we defined $\beta$ so as to satisfy the last equality, so we have shown that
\be 
\lie_t \delta N_{abc} = -\lie_\xi N_{abc} \, .
\ee
Thus, we have shown that $\lie_t \delta \mathcal Q$ can be represented in the Lagrangian displacement framework and has $\Delta j = 0$. 

Now, let $\eta^a$ be any axisymmetric trivial displacement. Then we have
\be
W[(0, \eta), \lie^2_t \delta \mathcal Q] = - W[(0, \lie_t \eta), \lie_t \delta \mathcal Q] = 0 \, ,
\ee
where the first equality follows from the same argument as used above to prove that $\mE$ is symmetric, and the second equality follows from the fact that $\lie_t \eta^a$ is an axisymmetric trivial displacement of the form $f \varphi^a$ (see subsection 
\ref{LagrangianformC}) and $\lie_t \delta \mathcal Q$ satisfies $\Delta j = 0$. Thus, $\lie^2_t \delta \mathcal Q$ is symplectically orthogonal to all trivial perturbations. Furthermore it follows immediately from conservation of ADM momentum that $\lie_t \delta \mathcal Q$ and $\lie^2_t \delta \mathcal Q$ have vanishing linearized momentum. Thus, $\lie^2_t \delta \mathcal Q \in \mathcal V$.
\end{proof}

Now, if $\delta \mathcal Q$ has exponential growth in time, then so does $\lie^2_t \delta \mathcal Q$. Therefore, the absence of exponentially growing solutions of the form $\lie^2_t \delta \mathcal Q$ implies the absence of any exponentially growing solutions at all. In view of this fact and the previous theorem, we have the following result:

\begin{theorem}
If $\mathcal E$ is non-negative on the subspace of axisymmetric perturbations in $\mathcal V$, then there are no smooth, axisymmetric solutions to the Einstein-fluid equations with suitable fall-off at infinity that have exponential growth in time, i.e., mode stability holds for all axisymmetric perturbations. Conversely, if ${\mathcal E} (\delta \phi, \delta \phi) < 0$ for some axisymmetric $\delta \phi \in \mathcal V$, then one has instability in the same sense as in Theorem \ref{dynstabthm}. \label{5.2}
\end{theorem}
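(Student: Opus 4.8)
The plan is to derive Theorem~\ref{dynstabthm} and the preceding lemma into the present statement, so that essentially no new computation is required. The converse (instability) half is immediate: if $\mE(\delta\phi,\delta\phi)<0$ for some axisymmetric $\delta\phi\in\mV$, then $\delta\phi$ is in particular an element of $\mV$, and Theorem~\ref{dynstabthm} already asserts that such a perturbation cannot approach a physically stationary solution at asymptotically late times. Thus the second sentence of the theorem is nothing more than Theorem~\ref{dynstabthm} specialized to axisymmetric data, and I would dispose of it in a single line.

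For the forward (mode-stability) half I would argue by contradiction. Suppose $\mE\ge 0$ on the axisymmetric perturbations in $\mV$, yet there exists a smooth, axisymmetric, non-pure-gauge solution $\delta\mathcal Q$ of the linearized Einstein-fluid equations, with suitable fall-off, whose time dependence is $\exp(\alpha t)$ with $\mathrm{Re}(\alpha)>0$. Such a $\delta\mathcal Q$ need not arise in the Lagrangian displacement framework, so Theorem~\ref{dynstabthm} does not apply to it directly. The crucial input is the preceding lemma: it shows that $\lie_t^2\delta\mathcal Q$ \emph{does} lie in $\mV$ (it is representable in the Lagrangian framework, has $\Delta j=0$, is symplectically orthogonal to all trivials, and has vanishing linearized ADM momentum), and it is manifestly axisymmetric since $\lie_t$ commutes with $\lie_\varphi$.

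It then remains to check that $\lie_t^2\delta\mathcal Q$ is still an exponentially growing mode. Since $\delta\mathcal Q\propto\exp(\alpha t)$, we have $\lie_t\delta\mathcal Q=\alpha\,\delta\mathcal Q$ and hence $\lie_t^2\delta\mathcal Q=\alpha^2\,\delta\mathcal Q$. Because $\mathrm{Re}(\alpha)>0$ we have $\alpha\ne0$, so $\alpha^2\ne0$ and $\lie_t^2\delta\mathcal Q$ is a nonzero constant multiple of $\delta\mathcal Q$; in particular it is again non-pure-gauge and exponentially growing with the same $\alpha$, so it cannot be physically stationary. Thus $\lie_t^2\delta\mathcal Q$ is an exponentially growing axisymmetric mode lying in $\mV$, contradicting the stability half of Theorem~\ref{dynstabthm}. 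Therefore no such $\delta\mathcal Q$ can exist, which establishes mode stability for \emph{all} axisymmetric perturbations, including those outside the Lagrangian framework.

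The only point that deserves care—and which I expect to be the single genuine obstacle in this otherwise short argument—is the passage from an arbitrary growing \emph{solution} to a growing \emph{mode}: the lemma together with the multiplication-by-$\alpha^2$ step is clean precisely for pure modes $\exp(\alpha t)$, which is exactly the content of the definition of mode stability adopted in the Introduction. I would therefore phrase the whole argument at the level of modes and emphasize that $\alpha^2\ne0$ is what guarantees that two applications of $\lie_t$ neither annihilate the perturbation nor spoil its exponential growth. No analytic estimate or delicate input is needed beyond what is already packaged in Theorem~\ref{dynstabthm} and the lemma; the real content of the theorem is simply the observation that two time derivatives carry an arbitrary axisymmetric perturbation into $\mV$ while preserving exponential growth.
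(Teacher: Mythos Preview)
Your argument is correct and follows essentially the same route as the paper: the converse is immediate from Theorem~\ref{dynstabthm}, and for the forward direction you use the preceding lemma to pass from an arbitrary axisymmetric growing mode $\delta\mathcal Q$ to $\lie_t^2\delta\mathcal Q\in\mV$, then invoke Theorem~\ref{dynstabthm} for the contradiction. The only addition beyond the paper's brief discussion is your explicit observation that $\lie_t^2\delta\mathcal Q=\alpha^2\delta\mathcal Q$ with $\alpha^2\neq0$, which usefully makes precise why two time derivatives preserve exponential growth and non-triviality.
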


\section{Thermodynamic Stability} \label{thermostabsec}

We turn our attention now to the thermodynamic stability of stars in thermal equilibrium. As explained in the Introduction, the criterion for thermodynamic stability is positivity of the quantity 
\be
{\mathcal E}' \equiv \delta^2 M  - \widetilde T \delta^2 S - \widetilde\mu \delta^2 N - \Omega \delta^2 J 
\ee
for all perturbations with $\delta M = \delta N = \delta J = 0$ (and, hence, $\delta S = 0$). In the case of dynamic stability, one can consider stability with respect to perturbations that lie in subspaces that are preserved under dynamic evolution, such as the subspace $\mV$ in Theorem \ref{dynstabthm}. However, the premise behind the notion of thermodynamic stability is that all states are accessible under the true dynamics, provided only that the fundamental conservation laws of $M$, $J$, and $N$ are respected. Thus, to prove thermodynamic stability of {\em any} perturbation, one must show positivity of $\mE'$ on {\em all} perturbations with $\delta M = \delta N = \delta J = 0$ (or, in the axisymmetric case, all axisymmetric perturbations with $\delta M = \delta N = \delta J = 0$).

We shall show that for all perturbations in the Lagrangian displacement framework, we have 
\be
\mE_r = \mE' \,,
\ee
where $\mE_r$ is the canonical energy in the ``rotating frame," i.e., defined with respect to the Killing field $v^a = t^a + \Omega \varphi^a$ to which $u^a$ is proportional. Thus, a necessary condition for thermodynamic stability is positivity of $\mE_r$ on all perturbations within the Lagrangian framework with $\delta J = 0$ (since $\delta N = \delta S = 0$ holds automatically for perturbations within the Lagrangian framework). Now, as previously mentioned, any perturbation with $\delta N = \delta S = 0$ can be described within the Lagrangian framework provided only that $\delta s/ |D_a s|$ is bounded \cite{Friedman}. Thus, for example, if the background star is such that $D_a s = 0$ at only one point and this zero is of order $1$, then the smooth Lagrangian perturbations are of co-dimension $1$ in the space of all smooth perturbations with $\delta N = \delta S = 0$. Thus, positivity of $\mE_r$ on Lagrangian perturbations with $\delta J = 0$ should also be ``nearly sufficient'' for thermodynamic stability. However, we shall not attempt to establish any sufficiency results along these lines, except for a remark about the isentropic case at the end of this section.

We begin by deriving an expression for the ordinary canonical energy, $\mE$, for perturbations of a star in dynamic (but not necessarily thermodynamic) equilibrium in terms of second order variations. Since $\mathcal E$ is gauge invariant, we may evaluate it in a gauge where the Lagrangian displacement vanishes. We thereby obtain
\be
\mathcal E\left(\delta_1\phi, \delta_2 \phi\right) = W^{(g)}\left[g_{ab}; \Delta_1 g_{ab}, \lie_t 
\Delta_2 g_{ab}\right]\,,
\ee
where $W^{(g)}$ is the ``gravitational part'' of $W$ (see \eqref{eq:symplecticform}), since the matter contribution, $W^{(m)}$, vanishes when the Lagrangian displacement vanishes. 
Now consider a 1-parameter family of solutions, $\phi (\lambda)$, corresponding to the perturbation $\delta \phi$, expressed in a gauge where the Lagrangian displacements vanish to all orders. Denoting the physical quantities in this gauge by $\widehat{\mathcal Q}(\lambda)$, we obtain
\be
\ba
\mathcal E (\delta\phi,\delta\phi) ={}& W^{(g)}\left[g_{ab}; \Delta g_{ab}, 
\lie_t \Delta g_{ab}\right] \\
={}& \left. W^{(g)}\left[\widehat g_{ab}(\lambda); \frac{d}{d\lambda} \widehat 
g_{ab}(\lambda), \lie_t \frac{d}{d\lambda} \widehat g_{ab}
(\lambda)\right]\right|_{\lambda=0} \\
={}& \left.\frac{d}{d\lambda}W^{(g)}\left[\widehat g_{ab}(\lambda); \frac{d}{d
\lambda} \widehat g_{ab}(\lambda), \lie_t \widehat g_{ab}(\lambda) 
\right]\right|_{\lambda=0} \\
={}& \left.\frac{d^2}{d\lambda^2}\widehat M(\lambda)\right|_{\lambda=0} \\
&+\int_{\Sigma} \left.\frac{d}{d\lambda}\left(t^a\left[ \frac{d}{d\lambda}\left(\widehat T_{a}^{\phantom ab}(\lambda)  \widehat \epsilon_{bdef}(\lambda)\right) -\frac{1}{2}  \widehat T^{bc}(\lambda) \frac{d\widehat g_{bc}(\lambda)}{d\lambda}\widehat \epsilon_{adef}(\lambda)  \right]\right)\right|_{\lambda=0} \\
={}& \delta^2 M +\int_{\Sigma} t^a \left[ \Delta^2\left( T_{a}^{\phantom ab}  \epsilon_{bdef}\right) -\frac{1}{2} \Delta \left(T^{bc}\Delta g_{bc} \epsilon_{adef} \right)   \right].
\ea
\label{canen2}
\ee
Here, in the third line, we have used the fact that $\widehat g(0) = g(0)$ is stationary. The fourth equality comes from the general identity \eqref{fundid} (and the field equations), and the definition of the ADM mass. The final equality uses the fact that the ADM mass is gauge invariant, so $\widehat M(\lambda) = M(\lambda)$. Below, we will simplify this expression further in the case where $\delta \phi$ is axisymmetric.

Now, consider the case where the background star is in thermodynamic equilibrium, so that, in particular, $\Omega$ is constant. Then the background $4$-velocity $u^a$ is proportional to the Killing field
\be
v^a = t^a + \Omega \varphi^a
\ee
and it is natural to consider the quantity
\be
\mathcal E_r (\delta_1 \phi, \delta_2 \phi) = W[\phi; \delta_1 \phi, \lie_v \delta_2 \phi] \, ,
\label{canenrot}
\ee
which may be interpreted as the ``canonical energy as measured in the frame that rotates rigidly with the star.'' A calculation parallel to the above calculation yields
\be\label{canenrot2}
\mathcal E_r (\delta\phi,\delta\phi) = \delta^2 M - \Omega \delta^2 J +\int_{\Sigma} v^a \left[ \Delta^2\left( T_{a}^{\phantom ab}  \epsilon_{bdef}\right) -\frac{1}{2} \Delta \left(T^{bc}\Delta g_{bc} \epsilon_{adef} \right)   \right],
\ee
where $J$ denotes the ADM angular momentum and the presence of the additional term $-\Omega \delta^2 J$ arises simply because of the asymptotic behavior of $v^a$ (as compared with $t^a$) at infinity. However, the last term can be seen to vanish using Lemma \ref{fluididentity}. Namely, the right hand side of the identity \eqref{fluidid} is zero in a gauge where the Lagrangian displacements vanish (since $\bN$ and $\bS$ are then fixed) so the identity evaluated at $\lambda=0$ in such a gauge says
\be\label{identityinDeltagauge}
0=u^a \left[ \frac{1}{2} T^{bc}\Delta g_{bc} \epsilon_{adef} - \Delta\left(T_a^{\phantom ab} \epsilon_{bdef}\right) \right],
\ee
and the $\lambda$ derivative of the identity evaluated at $\lambda=0$ in such a gauge yields
\be \label{secondorderDelta}
\ba
0&=\Delta\left(u^a \left[ \frac{1}{2} T^{bc}\Delta g_{bc} \epsilon_{adef} - \Delta\left(T_a^{\phantom ab} \epsilon_{bdef}\right) \right]\right)\\
&=u^a\Delta\left[ \frac{1}{2} T^{bc}\Delta g_{bc} \epsilon_{adef} - \Delta\left(T_a^{\phantom ab} \epsilon_{bdef}\right) \right]+\left(\Delta u^a\right) \left[ \frac{1}{2} T^{bc}\Delta g_{bc} \epsilon_{adef} - \Delta\left(T_a^{\phantom ab} \epsilon_{bdef}\right) \right] 
\\
&=u^a\left[ \frac{1}{2} \Delta\left(T^{bc}\Delta g_{bc} \epsilon_{adef} \right)- \Delta^2\left(T_a^{\phantom ab} \epsilon_{bdef}\right) \right],
\ea
\ee
where we have used the fact that $\Delta u^a$ is parallel to $u^a$ (equation \eqref{Deltau}) and \eqref{identityinDeltagauge}. Thus, the last term in \eqref{canenrot2} vanishes, and we obtain
\be
\mathcal E_r (\delta\phi,\delta\phi) =  \delta^2 M - \Omega \delta^2 J .
\ee
Taking account of the fact that we automatically have $\delta^2 N = \delta^2 S = 0$ for all variations describable within the Lagrangian framework, we see that, for perturbations within the Lagrangian framework, the quantity ${\mathcal E}_r$ coincides with the quantity $\mathcal E'$, as we desired to show. Consequently, we immediately obtain the following theorem;

\begin{theorem}
For a star in thermodynamic equilibrium, a necessary condition for thermodynamic stability is positivity of $\mathcal E_r$ on all linearized solutions within the Lagrangian framework that have $\delta J = 0$. \label{6.1}
\end{theorem}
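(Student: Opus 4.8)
The plan is to show that thermodynamic stability, which is by definition positivity of $\mathcal{E}'$ (see \eqref{thermostab}) on all perturbations with $\delta M = \delta J = \delta N = 0$, forces $\mathcal{E}_r > 0$ on every Lagrangian solution with $\delta J = 0$. The essential inputs are already in hand: the computation just completed shows that for perturbations described in the Lagrangian framework one has $\mathcal{E}_r(\delta\phi,\delta\phi) = \delta^2 M - \Omega\,\delta^2 J$, and since $\delta^2 N = \delta^2 S = 0$ identically in that framework, this coincides with $\mathcal{E}'$. So the only thing left is a first-order bookkeeping check: I must verify that a Lagrangian solution with $\delta J = 0$ is in fact an admissible test perturbation for thermodynamic stability, i.e. that it also satisfies $\delta M = \delta N = 0$.

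First I would fix a linearized solution $\delta\phi$ within the Lagrangian framework with $\delta J = 0$. Because total particle number and entropy cannot be varied in this framework, $\delta N = \delta S = 0$ automatically. Next I would invoke the first law for a star in thermodynamic equilibrium, $\delta M = \widetilde{T}\,\delta S + \widetilde{\mu}\,\delta N + \Omega\,\delta J$ (eq. \eqref{firstlaw3}, valid here because $\widetilde{T}$, $\widetilde{\mu}$, and $\Omega$ are constant), to conclude $\delta M = \Omega\,\delta J = 0$. Thus $\delta\phi$ satisfies $\delta M = \delta J = \delta N = 0$ and is precisely the type of perturbation on which thermodynamic stability demands $\mathcal{E}' > 0$. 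Since $\mathcal{E}_r = \mathcal{E}'$ for Lagrangian perturbations, this gives $\mathcal{E}_r(\delta\phi,\delta\phi) > 0$, and the contrapositive then reads: if $\mathcal{E}_r \le 0$ for some such $\delta\phi$, the star is not thermodynamically stable.

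The step I expect to be the crux is the one that makes the theorem ``immediate'' but is easy to overlook: namely, why the single restriction $\delta J = 0$ suffices, even though thermodynamic stability is phrased with the three conditions $\delta M = \delta J = \delta N = 0$. The resolution is that the Lagrangian framework silently supplies $\delta N = \delta S = 0$, and the thermodynamic-equilibrium first law then upgrades $\delta J = 0$ to $\delta M = 0$ as well. A secondary point worth stating explicitly is that $\mathcal{E}'$ does not depend on the choice of second-order completion of the perturbation (as established following \eqref{thermostab}), so the value $\delta^2 M - \Omega\,\delta^2 J$ computed in the Lagrangian gauge---where the displacement vanishes and the last term of \eqref{canenrot2} drops out via Lemma \ref{fluididentity}---is genuinely the same $\mathcal{E}'$ that appears in the stability criterion. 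With these observations the argument closes with no further computation.
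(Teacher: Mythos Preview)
Your proposal is correct and follows essentially the same approach as the paper, which derives $\mathcal{E}_r = \delta^2 M - \Omega\,\delta^2 J = \mathcal{E}'$ on Lagrangian perturbations and then states that the theorem follows ``immediately.'' You have simply made explicit the first-order bookkeeping the paper leaves implicit---namely that $\delta N = \delta S = 0$ hold automatically in the Lagrangian framework and that the equilibrium first law then turns $\delta J = 0$ into $\delta M = 0$, so the perturbation lies in the class on which thermodynamic stability is tested.
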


The identification of $\mathcal E_r$ as the quantity whose positivity determines thermodynamic stability is in accord with the analysis of Lindblom and Hiscock \cite{LindblomHiscock}. Lindblom and Hiscock further argued that dissipative processes will act to {\em stabilize} a star against the CFS dynamic instability implied by theorem \ref{dynstabthm}, which might seem to suggest that a star could be dynamically unstable (to the CFS instability) but thermodynamically stable. However, by the general arguments given in the Introduction, this is impossible. Indeed, it is easy to see that all rotating stars are thermodynamically unstable: A perturbation that corresponds to a gravitational wavepacket localized far from the star and in a high angular momentum state---with negligible perturbation to the star itself---can easily be made\footnote{\footnotesize This can be done by choosing the perturbation to be predominantly composed of modes of frequency $\omega$ and angular quantum number $m$ such that $0 < \omega < m \Omega$. \medskip} to have $\Omega \delta^2 J > \delta^2 M$, and, hence, $\mathcal E_r < 0$. In other words, it is always entropically favorable to put some of the angular momentum of the star in low energy gravitational radiation, and then use the rotational energy thereby gained to add thermal energy to the star. If gravitational radiation were assigned a non-zero entropy, it would be even more entropically favorable to do this. 

The resolution of the apparent discrepancy between this argument and the results of Lindblom and Hiscock is that Lindblom and Hiscock restricted consideration only to ``short length scale perturbations'' that are localized within the star and have negligible metric perturbation, thereby excluding the entropically favorable perturbations of the previous paragraph. For these short length scale perturbations, the condition for thermodynamic stability reduces to the two relations \eqref{localthermostab} for the thermodynamic stability of a homogeneous system. In other words, a star will have positive $\mE_r$ for short length scale perturbations if and only if \eqref{localthermostab} holds at each point in the star. Note that the first of these relations is equivalent to the Schwarzschild stability criterion obtained by Lindblom and Hiscock; the second condition does not appear in Lindblom and Hiscock's analysis, presumably because they did not consider the dissipative process of diffusion. In any case, since the CFS dynamic instability is of this short length scale type, it is highly plausible that, if the local thermodynamic stability criteria \eqref{localthermostab} hold, then dissipative processes will damp this instability, as claimed by Lindblom and Hiscock. Nevertheless, although dissipative processes may enormously increase the timescale---beyond the already possibly enormous timescale of the CFS instability---they cannot prevent the star from eventually reaching a state of higher entropy by radiating its angular momentum away into modes with $0 < \omega < m \Omega$.

Finally, we return to the expression \eqref{canen2} for the canonical energy for a background star that is in dynamic---but not necessarily thermodynamic---equilibrium. If we restrict consideration to axisymmetric perturbations (so that, in particular, $\Delta^k \varphi^a = 0$), then we have---making use of \eqref{secondorderDelta} and again choosing $\Sigma$ to be axisymmetric---
\be
\ba
&\int_{\Sigma} t^a \left[ \Delta^2\left( T_{a}^{\phantom ab}  \epsilon_{bdef}\right) -\frac{1}{2} \Delta \left(T^{bc}\Delta g_{bc} \epsilon_{adef} \right)   \right] \\
=&\int_{\Sigma} \left( |v|u^a-\Omega\varphi^a\right) \left[ \Delta^2\left( T_{a}^{\phantom ab}  \epsilon_{bdef}\right) -\frac{1}{2} \Delta \left(T^{bc}\Delta g_{bc} \epsilon_{adef} \right)   \right] \\
=& - \int_\Sigma\Omega \Delta^2\left( \varphi^a T_{a}^{\phantom ab}  \epsilon_{bdef}\right) \\
=& - \int_\Sigma\Omega\Delta^2J_{def} 
\ea
\ee
so we obtain
\be
\mathcal E (\delta\phi,\delta\phi) = \delta^2M - \int_\Sigma\Omega\Delta^2\bJ \, ,
\ee
which is our desired general expression for $\mathcal E$ in terms of second order variations for a star in dynamic equilibrium.

If the star is rigidly rotating, the last term becomes 
\be
\int_\Sigma\Omega\Delta^2 \bJ  = \Omega \int_\Sigma \Delta^2 \bJ = \Omega \int_\Sigma \delta^2 \bJ = \Omega \delta^2 J \, ,
\ee
where the last equality follows from an argument similar to the argument that led to \eqref{Jeq2}. 
Thus, for an axisymmetric perturbation of a star in thermodynamic equilibrium, we have\footnote{\footnotesize This result can also be seen directly from the fact that $\mathcal E_r (\delta\phi,\delta\phi) - \mathcal E (\delta\phi,\delta\phi) = \Omega W[\phi, \lie_\varphi \phi] = 0$.\medskip}
\be
\mathcal E (\delta\phi,\delta\phi) = \mathcal E_r (\delta\phi,\delta\phi) \, .
\ee
As an immediate consequence, we have the following theorem:

\begin{theorem}For a star in thermodynamic equilibrium, a necessary condition for thermodynamic stability with respect to axisymmetric perturbations is positivity of $\mE$ on all axisymmetric linearized solutions within the Lagrangian framework that have $\delta J = 0$. \label{6.2}
\end{theorem}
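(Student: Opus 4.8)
The plan is to deduce Theorem~\ref{6.2} directly from Theorem~\ref{6.1} together with the equality $\mathcal{E}(\delta\phi,\delta\phi) = \mathcal{E}_r(\delta\phi,\delta\phi)$ just established for axisymmetric perturbations of a star in thermodynamic equilibrium. All of the substantive analysis is already in hand, so the argument is essentially one of specialization and substitution.

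First I would invoke Theorem~\ref{6.1}: thermodynamic stability of the star requires $\mathcal{E}_r(\delta\phi,\delta\phi) > 0$ for every linearized solution $\delta\phi$ in the Lagrangian framework with $\delta J = 0$, the conditions $\delta N = \delta S = 0$ holding automatically within that framework (so that, via the first law \eqref{1stlaw}, $\delta M = 0$ as well). If one now demands only stability against \emph{axisymmetric} perturbations, then the class of test perturbations entering the stability criterion is correspondingly restricted to axisymmetric ones. Hence a necessary condition for thermodynamic stability with respect to axisymmetric perturbations is positivity of $\mathcal{E}_r$ on all \emph{axisymmetric} Lagrangian solutions with $\delta J = 0$.

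Next I would replace $\mathcal{E}_r$ by $\mathcal{E}$ using the identity $\mathcal{E}(\delta\phi,\delta\phi) = \mathcal{E}_r(\delta\phi,\delta\phi)$, valid precisely for the axisymmetric perturbations at issue. This identity followed from the general second-order expression $\mathcal{E}(\delta\phi,\delta\phi) = \delta^2 M - \int_\Sigma \Omega\, \Delta^2 \bJ$ together with the rigid-rotation relation $\int_\Sigma \Omega\, \Delta^2 \bJ = \Omega\, \delta^2 J$ (equivalently, from $\mathcal{E}_r - \mathcal{E} = \Omega\, W[\phi,\lie_\varphi\phi] = 0$). Substituting this into the necessary condition of the previous paragraph yields at once that positivity of $\mathcal{E}$ on all axisymmetric Lagrangian solutions with $\delta J = 0$ is necessary for thermodynamic stability with respect to axisymmetric perturbations, which is the assertion of the theorem.

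I do not expect any genuine obstacle, since the two ingredients have already been proved; the only points requiring care are to note that the specialization of Theorem~\ref{6.1} to axisymmetric test perturbations is legitimate (an axisymmetric Lagrangian solution with $\delta J = 0$ is a fortiori such a solution), and that thermodynamic equilibrium is what guarantees constancy of $\Omega$---i.e.\ rigid rotation---which is exactly the hypothesis that makes the step $\int_\Sigma \Omega\, \Delta^2 \bJ = \Omega\, \delta^2 J$, and hence $\mathcal{E} = \mathcal{E}_r$, valid.
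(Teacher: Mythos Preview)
Your proposal is correct and follows essentially the same route as the paper: the paper states Theorem~\ref{6.2} as ``an immediate consequence'' of the identity $\mathcal{E}(\delta\phi,\delta\phi)=\mathcal{E}_r(\delta\phi,\delta\phi)$ for axisymmetric perturbations of a rigidly rotating (thermodynamic-equilibrium) star, combined with Theorem~\ref{6.1}. Your specialization-and-substitution argument is exactly this, and your remarks about why rigid rotation is needed and why the restriction to axisymmetric test perturbations is legitimate are apt.
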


As a simple application of our results, consider a star at $T=0$ for which the entropy per particle, $s$, takes its minimum value $s=0$ throughout the star\footnote{We assume here that when $T=0$, the condition for thermodynamic stability remains positivity of $\mathcal E'$ for perturbations with $\delta N = \delta S = \delta J = 0$, even though the argument given for this criterion in the Introduction assumed $T > 0$.}. Then any perturbation for which $\delta S = 0$ must have $\delta s = 0$ everywhere. Hence, for this ``isentropic case,'' every perturbation with $\delta S = \delta N = 0$ can be described in the Lagrangian framework. Consequently, in this case, the word ``necessary'' can be replaced by ``necessary and sufficient'' in Theorems \ref{6.1} and \ref{6.2}. Now consider spherically symmetric perturbations of a static, spherically symmetric isentropic star. Such perturbations obviously have $\delta J = 0$. It is not difficult to show that there do not exist any spherically symmetric trivial perturbations. Consequently, we have $\mV = \mathcal C$. Comparison of Theorems \ref{5.2} and \ref{6.2} (with ``necessary'' replaced by ``necessary and sufficient'') then immediately yields the following result\footnote{\footnotesize It should be noted that the relationship between dynamic and thermodynamic stability has been studied extensively in the special case of spherically symmetric perturbations of static, spherically symmetric stars. In particular, Sorkin, Wald, and Zhang \cite{SWZ} showed equivalence between dynamic and thermodynamic stability in the case of a radiation fluid, while Roupas \cite{Roupas} showed equivalence for any fluid with zero chemical potential (which includes the radiation case). In addition, Gao \cite{Gao1, Gao2} showed for general fluids that if a spherically symmetric configuration is at an extremum of total entropy with respect to spherically symmetric perturbations that fix total mass and particle number, then the configuration is in static equilibrium. \medskip}: {\em In the isentropic case, for spherically symmetric perturbations of static, spherically symmetric stars, thermodynamic stability is equivalent to dynamic stability.} 

\begin{acknowledgments}

  We wish to thank John Friedman for many helpful discussions. This
  research was supported in part by NSF grant PHY~12-02718 to the
  University of Chicago, and by NSERC\@.  S.~R.~G. is supported by a
  CITA National Fellowship at the University of Guelph, and he thanks
  the Perimeter Institute for hospitality.

\end{acknowledgments}

\appendix

\section{Existence of Desired Solutions to the Linearized Constraints} \label{constraintsproof}

Let $\Sigma$ be a $t-\varphi$ reflection invariant Cauchy surface for a star in dynamic equilibrium. Let $\boldsymbol e$ be a fixed, non-dynamical volume element on $\Sigma$, so the volume element associated with the induced metric on $\Sigma$ is $\sqrt{h} \boldsymbol e$. Let $\nu^a$ be the future-directed unit normal to $\Sigma$. Consider perturbations off of this background. The linearized Hamiltonian constraint on $\Sigma$ is 
\be
0 = \delta \left(\nu^a C_a \right) =  \left[-\frac{1}{8 \pi} \delta \left(\sqrt{h} \nu^a\nu^bG_{ab} \right) +  \delta \left(\sqrt{h} \nu^a\nu^bT_{ab}\right) \right]
\ee
and the linearized momentum constraint is
\be
0 = \delta \left(h_a^{\phantom ab} C_b \right) = \left[-\frac{1}{8 \pi} \delta \left(\sqrt{h}  h_a^{\phantom ab} \nu^cG_{bc} \right) +  \delta \left(\sqrt{h} h_a^{\phantom ab} \nu^cT_{bc}\right) \right] \, .
\ee
All quantities appearing in these equations can be expressed\footnote{\footnotesize Note, in particular, that $\delta (u^a \nu_a)$ can be expressed in terms of $\delta(h_{ij} u^i u^j)$ and background quantities  on account of the normalization condition on $u^a$.} in terms of background quantities and the perturbation quantities $\delta h_{ij}$, $\delta \pi^{ij}$, $\delta n$, $\delta s$, and $\delta u^i$, where $u^i$ denotes the $3$-velocity, i.e., the projection of $u^a$ tangent to $\Sigma$. Alternatively, we can replace the fluid quantities $(\delta n, \delta s, \delta u^i)$ with $(\delta {\mathcal N}, \delta {\mathcal S}, \delta {\mathcal J}, \delta u^i_\perp)$, where $\delta u^i_\perp$ is the projection of $\delta u^i$ perpendicular to $\varphi^i$ and 
\be \label{densitydefs}
\ba
\mathcal N &\equiv -\sqrt{h\,} n (u^a \nu_a)\,, \\
\mathcal S &\equiv s\, \mathcal N\,, \\
\mathcal J &\equiv j\,  \mathcal N = \frac{\rho+p}{n} (u^a \varphi_a) \mathcal N, \\
\ea
\ee
so that 
\be
\ba
\overline{\delta \bN} &= (\delta {\mathcal N}) {\boldsymbol e}\,, \\
\overline{\delta \bS} &= (\delta {\mathcal S}) {\boldsymbol e}\,, \\
\overline{\delta \bJ} &= (\delta {\mathcal J}) {\boldsymbol e}\,, \\
\ea
\ee
where $\overline{\delta \bN}$, $\overline{\delta \bS}$, and $\overline{\delta \bJ}$ denote the pullbacks of $\delta \bN$, $\delta \bS$, and $\delta \bJ$ to $\Sigma$.

In terms of these variables, the linearized Hamiltonian constraint takes the form
\be
\ba
\frac{1}{16\pi} \Biggl[ &-R^{ij}(h) \delta h_{ij} + D^i D^j \delta h_{ij} - D^i D_i \delta h_j^{\phantom jj} + h^{-1} \pi^{ij} \pi_{ij} \delta h_k^{\phantom kk}-2h^{-1} \pi_{ij}\delta\pi^{ij}\\
&  - 2 h^{-1} \pi_i^{\phantom ij}\pi^{ik} \delta h_{jk} +\sqrt{h}\left(\nu^a\nu^b G_{ab}\right)\delta h_j^{\phantom jj} +8\pi\sqrt{h}T^{ij}\delta h_{ij}\Biggr] \\ &\qquad\qquad\qquad\qquad\qquad\qquad\quad =  -\frac{1}{u^a \nu_a} \left[\mu \delta\mathcal N + T \delta\mathcal S + \frac{u^a \varphi_a}{\varphi^b \varphi_b} \delta\mathcal J \right]   \, ,
\label{lhc}
\ea
\ee
where we have used the fact that since the background spacetime is $t - \varphi$ symmetric about $\Sigma$, the background $\pi^{ij}$ must be
odd under the action of the reflection isometry $\varphi \to - \varphi$ of $h_{ij}$ on $\Sigma$, so, in particular
${\pi^i}_i = 0$. The $\varphi^i$-component of the linearized momentum constraint is 
\be
\ba
\frac{1}{16\pi} \varphi^i \Biggl[ 2\sqrt{h\,}& D_j \left( h^{-1/2} \delta \pi_i^{\phantom ij} \right) + 2\pi^{jk} D_j \delta h_{ik}\\
&  - \pi^{jk} D_i \delta h_{jk} + 2\sqrt{h\,}\delta h_{ik}D_j \left( h^{-1/2} \pi^{jk}\right) \Biggr] = -\delta \mathcal J \, ,
\label{lmcpa}
\ea
\ee
and the components of the linearized momentum constraints perpendicular to $\varphi^a$ are
\be
\ba
\frac{1}{16\pi} \Biggl[ 2\sqrt{h\,}& D_j \left( h^{-1/2} \delta \pi^{ij} \right) + 2\pi^{jk} D_j \delta h^i_{\phantom ik}- \pi^{jk} D^i \delta h_{jk}\Biggr]_\perp
=  \sqrt{h\,}(u^b \nu_b)(\rho+p) \delta u^i_{\perp} \, ,
\label{lmcpe}
\ea
\ee
where the subscript ``$\perp$'' means the projection orthogonal to $\varphi^i$ in $\Sigma$. Note that $\delta u^i_{\perp}$ does not appear at all in eqs. \eqref{lhc} and \eqref{lmcpa}.

The following lemma is needed in Theorem \ref{fluididentity} to show that we can solve the constraints for any choice of axisymmetric $\overline{\delta \bN}$, $\overline{\delta \bS}$, and $\overline{\delta \bJ}$:

\medskip

\noindent
{\bf Lemma A.1:} {\em Let $\delta {\mathcal N}$, $\delta {\mathcal S}$, and $\delta {\mathcal J}$ be specified arbitrarily as smooth, axisymmetric functions with support inside the background star, such that $\delta {\mathcal J}/\varphi^a \varphi_a$ also is smooth (i.e., $\delta \mathcal J$ vanishes on the ``rotation axis''). Then we can choose the remaining initial data $(\delta h_{ij}, \delta \pi^{ij}, \delta u^i_\perp)$ so as to solve the linearized constraints \eqref{lhc}--\eqref{lmcpe}.}

\smallskip
\noindent
{\bf Proof:} We choose $\delta h_{ij}$ and $\delta \pi^{ij}$ to be of the form
\be \label{ansatz}
\ba
\delta h_{ij} &=  \psi h_{ij} \,,\\ 
\delta \pi^{ij} &= \sqrt{h\,} D^{(i} F^{j)} - \psi \pi^{ij} \, .
\ea
\ee
We choose $F^i$ to satisfy
\be
D^j \left(  D_{(i} F_{j)} \right) = - 8 \pi \frac{\delta \mathcal J}{\varphi^a \varphi_a} \varphi_i
\label{Feq}
\ee
Since the right side is a smooth vector field of compact support, by standard arguments (see, e.g., \cite{ChruscielDelay}), there exists a unique solution to \eqref{Feq} that goes to zero at infinity. Since the right side is axisymmetric and is odd under the action of the reflection isometry $\varphi \to - \varphi$ of $h_{ij}$ on $\Sigma$, the same must be true of $F^i$. It may be straightforwardly verified that for $\delta h_{ij}$ and $\delta \pi^{ij}$ of the form \eqref{ansatz} together with this choice of $F^i$ and with {\em any} choice of $\psi$, the $\varphi$-component \eqref{lmcpa} of the linearized momentum constraint is satisfied. 

Now substitute \eqref{ansatz} into the linearized Hamiltonian constraint \eqref{lhc}. We obtain
\be \label{Hconstraint}
-D_i D^i \psi + M \psi = h^{-1/2} \pi^{ij} D_iF_j + 
\frac{8 \pi  h^{-1/2} }{u^a \nu_a} \left[\mu \delta\mathcal N + T \delta\mathcal S + \frac{u^a \varphi_a}{\varphi^b \varphi_b} \delta\mathcal J \right] \, ,
\ee
where
\be
M \equiv  h^{-1} \pi^{ij} \pi_{ij} + 8\pi \frac{(\varphi^a u_a)^2}{\varphi^b\varphi_b} (\rho + p)  + 4\pi (\rho+3p).
\ee
Since $M$ is manifestly non-negative, and since the right side of \eqref{Hconstraint} vanishes suitably rapidly at infinity, by standard arguments \cite{CantorBrill, ChruscielDelay}, there exists a unique solution, $\psi$, of this equation that vanishes at infinity. With our previous choice of $F^i$ and this choice of $\psi$, our ansatz \eqref{ansatz} solves both \eqref{lhc} and \eqref{lmcpa}.

Finally, we note that with our choice $\delta h_{ij}$ and $\delta \pi^{ij}$, the vector inside the square brackets on the left side of \eqref{lmcpe} is axisymmetric and is odd under the action of the reflection isometry $\varphi \to - \varphi$ on $\Sigma$. It follows that the projection of this vector perpendicular to $\varphi^i$ vanishes. Consequently, we may solve the remaining constraint \eqref{lmcpe} by choosing $\delta u^i_\perp = 0$. $\Box$

\section{Phase Space Construction} \label{phasespaceproof}

In this Appendix, we will show that $\delta\phi$ is a degeneracy of $W$ (given by \eqref{eq:symplecticform}) if and only if $0=\delta h_{ij}=\delta \pi^{ij}_{\phantom{ij}klm} = q^a_{\phantom a b}\xi^b = \delta u^i$ on $\Sigma$. This shows that phase space may be identified with the set of fields $(h_{ij},\boldsymbol\pi^{ij},\psi, u^i)$ on a Cauchy surface $\Sigma$, where $\psi$ is a diffeomorphism from the space of fiducial flowlines to $\Sigma$. We will then obtain canonical coordinates on phase space.

It is clear from \eqref{eq:symplecticform} that $W$ depends at most on the following quantities on $\Sigma$ (for each of the two perturbations): $\delta h_{ij}$, $\delta \pi^{ij}_{\phantom{ij}klm}$, $\delta N$ (the perturbed lapse), $\delta N_i$ (the perturbed shift), $\xi^a$, and the normal derivative of $\xi^a$. Using \eqref{lagrangianpert}, we may write
\be
\ba
W[\delta_1 \phi,\delta_2 \phi] &= W[\delta_1\phi,\Delta_2 \phi] - W[\delta_1 \phi,\mathcal L_{\xi_2} \phi ], \\
%&= W^{(g)}[\delta_1g_{ab},\Delta_2 g_{ab}] + W^{(m)}[\delta_1\phi,\Delta_2 \phi] + W[\mathcal L_{\xi_2} \phi, \delta_1 \phi],
%\\
%&= W^{(g)}[\delta_1g,\Delta_2 g] + W^{(m)}[\Delta_1\phi,\Delta_2 \phi] - W^{(m)}[\mathcal L_{\xi_1}\phi,\Delta_2 \phi] + W[\mathcal L_{\xi_2} \phi, \delta_1 \phi].
\ea
\ee
where $\Delta_2 \phi$ denotes the perturbation $(\delta g_{ab} = \Delta_2 g_{ab},\xi^a = 0)$. We use \eqref{eq:symplecticform} to evaluate the first term and \eqref{ham0} to evaluate the second, thereby obtaining
\be
\ba
W[\delta_1 \phi,\delta_2 \phi] &= \int_\Sigma \biggl[\frac{1}{16\pi}\left(\Delta_2 h_{ij} \delta_1 \boldsymbol\pi^{ij}  -\delta_1 h_{ij} \Delta_2 \boldsymbol\pi^{ij} \right)\\
&\qquad \qquad- \xi_1^a\Delta_2  \boldsymbol P_{a} -i_{\xi_2}(\bE \cdot \delta_1\phi)-\delta_1 \boldsymbol{C}_{\xi_2} \biggr],
\ea
\ee
where $\Delta_2 h_{ij}$ denotes the perturbed spatial metric, $\delta h_{ij}$, associated with the perturbation $\Delta_2 \phi$---i.e., it does {\em not} mean $\delta_2 h_{ij}+\lie_{\xi_2} h_{ij}$---and similarly for $\Delta_2 \pi^{ij}_{\phantom{ij}klm}$, $\Delta_2 N$, and $\Delta_2 N_a$. We write
\be \label{Adefn}
\Delta \boldsymbol P_a = \boldsymbol A_a^{\phantom abc} \Delta g_{bc},
\ee
where
\be
  A_{a\phantom{bc}def}^{\phantom abc} = -\frac{1}{2}  u^b u^c P_{adef}- \frac{1}{2} c_s^2  q^{bc}P_{adef} + \left(\frac{\rho+p}{n}\right)q_a^{\phantom a (b}u^{c)}N_{def}.
\ee
We note that $u^a \boldsymbol A_a^{\phantom abc}=0$, since $u^a \boldsymbol P_a=0$ (see \eqref{Pdefn}). Writing $\Delta g_{ab}$ in terms of the perturbations to the lapse, shift, and spatial metric, we obtain
\be \label{Wfordegeneracy}
\ba
W[\delta_1 \phi,\delta_2 \phi] = \int_\Sigma \Biggl[ \frac{1}{16\pi}&\left(\Delta_2 h_{ij} \delta_1 \boldsymbol{\pi}^{ij} - \delta_1 h_{ij} \Delta_2 \boldsymbol{\pi}^{ij}\right) -i_{\xi_2}(\bE \cdot \delta_1\phi)-\delta_1 \boldsymbol{C}_{\xi_2} \\
& -  \xi_1^a \boldsymbol A_a^{\phantom abc} \left( \Delta_2 h_{bc} -\frac{2}{N} \nu_b \nu_c \Delta_2 N - \frac{2}{N} \nu_b \Delta_2 N_c \right)  \Biggr],
\ea
\ee
where $\nu^a$ is the unit normal to $\Sigma$. The quantities $\Delta_2 h_{ij}$, $\Delta_2 \boldsymbol\pi^{ij}$, $\Delta_2 N$, $\Delta_2 N_i$, and $\xi_2^a$ on $\Sigma$ can be varied independently. (Note that the field equations and the linearized constraints are not being imposed here, since we are seeking the degeneracy directions of $W$ in the full field space, not merely in the solution space.) Thus, $\delta \phi \equiv \delta_1 \phi$ is a degeneracy of $W$ if and only if the coefficients of $\Delta_2 h_{ij}$, $\Delta_2 \boldsymbol\pi^{ij}$, $\Delta_2 N$, $\Delta_2 N_i$, and $\xi_2^a$ in (the pullback to $\Sigma$ of) the integrand of \eqref{Wfordegeneracy} are each individually zero, i.e., if and only if the following conditions hold:
\begin{align}
\label{dc1} 0&= \frac{1}{16 \pi}\delta \pi^{ij}_{\phantom{ij}klm}+\xi^a A_{a\phantom{ij}klm}^{\phantom aij}\,,\\
\label{dc2} 0&= \delta h_{ij} \,,\\
\label{dc3} 0&=\xi^a A_{a\phantom{bc}klm}^{\phantom abc}\nu_b \nu_c\,,\\
\label{dc4} 0&=\xi^a A_{a\phantom{bi}klm}^{\phantom abi}\nu_b\,,\\
\label{dc5} 0&=\delta  \overline{\boldsymbol C}_{a}+ \overline{i_\nu ( \boldsymbol E \cdot \delta\phi)}\nu_{a}\,.
\end{align}
Conditions \eqref{dc3} and \eqref{dc4} together imply
\be
0 = \xi^a A_{a\phantom{bc}klm}^{\phantom abc}\nu_b = \frac{1}{2}(\rho+p)\left[(u^b \nu_b)^2 \delta^c_{\phantom cd} - c_s^2 \nu_b \nu_d q^{bc} \right] \xi^a q_a^{\phantom a d} \nu^e \epsilon_{eklm} \, .
\label{xiperp}
\ee
Using $c_s^2 \leq 1$ (see \eqref{Tpc}), we see that the right side cannot vanish unless
\be
\xi^a q_a^{\phantom a d} = 0\,,
\ee
i.e., $\xi^a$ is proportional to $u^a$. It also follows from \eqref{dc1} and \eqref{xiperp} that $\delta \boldsymbol\pi^{ij}= 0$.

Thus, we have shown that \eqref{dc1}--\eqref{dc4} are equivalent to $\delta h_{ij} = 0$, $\delta \boldsymbol\pi^{ij} = 0$, and $\xi^a \propto u^a$. We now show that, in the presence of these conditions, the final condition \eqref{dc5} is equivalent to $\delta u^i=0$. Let $\delta_1 \phi$ be such that $0=\delta_1 h_{ab}=\delta_1 \boldsymbol\pi^{ab}=q^a_{\phantom ab}\xi_1^b$ on $\Sigma$. We will show that $\delta_1 \phi$ is a degeneracy of $W$ if and only if $\delta_1 u^i = 0$. 

To show this, we write
\be
\xi_1^a = f u^a + \tau \zeta^a,
\ee
where $\tau$ is a smooth function that vanishes on $\Sigma$ and is such that $\nabla_a \tau = \nu_a$ on $\Sigma$ and $\zeta^a u_a = 0$ everywhere. 
Then we have
\be
W[\delta_1 \phi, \delta_2 \phi] = W[(0, f u^a), \delta_2 \phi]+W[(\delta_1 g_{ab} , \tau \zeta^a), \delta_2 \phi] = W[(\delta_1 g_{ab} , \tau \zeta^{a}), \delta_2 \phi],
\ee
since the flowline trivial $(0, f u^a)$ is automatically a degeneracy of $W$ since it satisfies\footnote{\footnotesize Any trivial satisfies $\delta \bC_a = 0$, and a flowline trivial satisfies $E\cdot\delta\phi=-fu_b \nabla_a T^{ab}=0$ since the $u^a$-component of stress energy conservation is automatically satisfied in the Lagrangian formalism.\medskip} \eqref{dc5}. Using \eqref{eq:symplecticform}, we obtain
\be
\ba
W[\delta_1 \phi, \delta_2 \phi] &= \int_\Sigma \left( \xi_2^a \delta_1 P_{apqr} - [\tau \zeta, \xi_2]^a P_{apqr} \right)\\
&= \int_\Sigma \left( \xi_2^a \delta_1 P_{apqr} + \xi_2^b \nu_b\zeta^a P_{apqr} \right) \, .
\ea
\ee
We now eliminate $\zeta^a$ in terms of $\delta_1 g_{ab}$ and $\delta_1 (h^a_{\phantom ab}u^b)$. 
A lengthy calculation yields
\be
W[\delta_1 \phi, \delta_2 \phi] = \int_{\Sigma} \left(\rho+p\right)\xi^a_2 B_{ab} \delta_1 \left( h^{bc} u_c \right) \, ,
\label{wdeg}
\ee
where
\be
B_{ab} = -2 u^c h_{b[a}\nu_{c]} + \frac{c_s^2}{(u^d \nu_d)^2}q_{ac}\nu^c u_b \, .
\ee
Again, using $c_s^2 \leq 1$, we find that the right side of \eqref{wdeg} vanishes for all $\xi_2^a$ if and only if $\delta_1 ( h^{bc} u_c) = 0$, as we desired to show.

Thus, we have shown that $\delta \phi$ is a degeneracy of $W$ if and only if the quantities $(\delta h_{ij}, \delta \boldsymbol \pi^{ij}, q^a_{\phantom ab}\xi^b, \delta u^i)$ vanish on $\Sigma$. These quantities are the first order variations of the quantities
$(h_{ij}, \boldsymbol \pi^{ij}, \psi, u^i)$ on $\Sigma$, where $\psi$ is a diffeomorphism from the space of fiducial flowlines, $\Sigma'$, to $\Sigma$ (see subsection \ref{LagrangianformB}). Thus, phase space is described by the quantities $(h_{ij}, \boldsymbol \pi^{ij}, \psi, u^i)$ on, $\Sigma$.

The variables $(\psi,u^i)$ are not canonically conjugate, as the symplectic product of two pure $\psi$ perturbations (keeping $u^i$ fixed) is not necessarily zero. One can obtain canonically conjugate variables by representing the dynamical diffeomorphism $\chi$ as a set of four ``coordinate'' scalar fields, following, e.g., \cite{LeeWald, Iyer}. Let $x^{\prime \mu'}$ be coordinates on $M'$ such that $x^{\prime1}$, $x^{\prime2}$, and $x^{\prime3}$ are constant along the fiducial flowlines. Then we can encode the information in $\chi$ via the 4 scalar fields
\be
x^{\mu'}\equiv x^{\prime\mu'} \circ \chi^{-1}.
\ee
The diffeomorphism $\psi:\Sigma\rightarrow\Sigma'$ is specified by giving $\left.x^{ i}\right|_\Sigma$ for $i=1,2,3$. 
Using the Lagrangian \eqref{L}, and following the prescription of section \ref{diffco}, one then finds that the matter part of the symplectic form is
\be
W^{(m)}[\phi;\delta_1\phi,\delta_2\phi] = \int_\Sigma \sum_{\mu=0}^{3}\left(\delta_2 x^{\mu} \delta_1 \boldsymbol p_{{\mu}} - \delta_1 x^{\mu} \delta_2 \boldsymbol p_{{\mu}} \right),
\ee
where
\be
\boldsymbol p_{{\mu}} \equiv (\chi^{-1}_\ast)_{\mu}^{\phantom{\mu} a} \boldsymbol P_{a},
\ee
with  $(\chi^{-1}_\ast)_{\mu}^{\phantom{\mu} a}$ being the inverse of $(\chi_\ast)^{\mu}_{\phantom{\mu} a} = \nabla_a x^{\mu} $. 
Since $(\chi^{-1}_\ast)_{\mu}^{\phantom{\mu} a} \nabla_a x^\nu = \delta_\mu^{\phantom\mu\nu}$, it follows that
\be
(\chi^{-1}_\ast)_{0}^{\phantom{0} a} \propto u^a,
\ee
since both sides annihilate $\nabla_a x^i$. Consequently,
\be
p_{0} \propto  u^a \boldsymbol P_a = 0.
\ee
Thus, we obtain
\be
W^{(m)}[\phi;\delta_1\phi,\delta_2\phi] = \int_\Sigma \sum_{i=1}^{3}\left(\delta_2 x^{i} \delta_1 \boldsymbol p_{{i}} - \delta_1 x^{i} \delta_2 \boldsymbol p_{{i}} \right).
\ee
The variables $x^{i}$ and $\boldsymbol p_{i}$ for $i=1,2,3$ are thus canonically conjugate.
%\newpage

\bibliography{mybib}

\end{document}